\newcommand{\abs}[1]{\left|#1\right|}
\newcommand{\paren}[1]{\left(#1\right)}
\newcommand{\set}[1]{\left\{#1\right\}}
\DeclareMathOperator{\sat}{sat}
\newcommand{\E}{\mathbb{E}}
\newtheorem{problem}{Problem}
\title{Distributed weak independent sets in hypergraphs: Upper and lower bounds} %TODO Please add
\author{Duncan Adamson}{School of Computer Science, University of St Andrews, UK}{duncan.adamson@st-andrews.ac.uk}{0000-0003-3343-2435}{}
\author{Will Rosenbaum}{Department of Computer Science, University of Liverpool, UK}{w.rosenbaum@liverpool.ac.uk}{0000-0002-7723-9090}{}
\author{Paul G. Spirakis}{Department of Computer Science, University of Liverpool, UK}{P.Spirakis@liverpool.ac.uk}{0000-0001-5396-3749}{}
\authorrunning{Duncan Adamson, Will Rosenbaum, and Paul G. Spirakis} %TODO mandatory. First: Use abbreviated first/middle names. Second (only in severe cases): Use first author plus 'et al.'
\keywords{} %TODO mandatory; please add comma-separated list of keywords
\begin{document}

\maketitle

%TODO mandatory: add short abstract of the document
\begin{abstract}
In this paper, we consider the problem of finding weak independent sets in a distributed network represented by a hypergraph. In this setting, each edge contains a set of  $r$ vertices rather than simply a pair, as in a standard graph. A $k$-weak independent set in a hypergraph is a set where no edge contains more than $k$ vertices in the independent set. We focus two variations of this problem. First, we study the problem of finding $k$-weak maximal independent sets, $k$-weak independent sets where each vertex belongs to at least one edge with $k$ vertices in the independent set. Second we introduce a weaker variant that we call $(\alpha, \beta)$-independent sets where the independent set is $\beta$-weak, and each vertex belongs to at least one edge with at least $\alpha$ vertices in the independent set. Finally, we consider the problem of finding a $(2, k)$-ruling set on hypergraphs, i.e. independent sets where no vertex is a distance of more than $k$ from the nearest member of the set.

Given a hypergraph $H$ of rank $r$ and maximum degree $\Delta$, we provide a LLL formulation for finding an $(\alpha, \beta)$-independent set when $(\beta - \alpha)^2 / (\beta + \alpha) \geq 6 \log(16 r \Delta)$, an $O(\Delta r / (\beta - \alpha + 1) + \log^* n)$ round deterministic algorithm finding an $(\alpha, \beta)$-independent set, and a $O(\Delta^2(r - k) \log r + \Delta \log r \log^* r + \log^* n)$ round algorithm for finding a $k$-weak maximal independent set. Additionally, we provide zero round randomized algorithms for finding $(\alpha, \beta)$ independent sets, when $(\beta - \alpha)^2 / (\beta + \alpha) \geq 6 c \log n + 6$ for some constant $c$, and finding an $m$-weak independent set for some $m \geq r / 2k$ where $k$ is a given parameter.
% 
% algorithms finding a $(\alpha, \beta)$-IS with $\beta \geq 12 \ln (4 r \Delta)$ and $\alpha \leq r - 12 \ln(4r\Delta)$ in $O(\log^2 n)$ rounds via an LLL (Lovasz Local Lemma) formulation, an $(\alpha, \beta)$-IS with $\beta \geq 2 \alpha$ in $O(\Delta r /\alpha + \log^* n)$ rounds, and finding a $(2, k)$-ruling set in $O(\Delta(k + 2^{k - 1}r/ 3^{k - 1}) + \log^* n)$ rounds. 
%We additionally show that, given any $k'$-weak maximal independent set, we can construct for some $k < k'$ a $k$-weak maximal independent set in $O(k' \Delta + \log^* n)$ rounds.
Finally, we provide lower bounds of $\Omega(\Delta + \log^* n)$ and $\Omega(r + \log^* n)$ on the problems of finding a $k$-weak maximal independent sets for some values of $k$.
\end{abstract}

\section{Introduction}

Independent sets are a key tool in distributed networks, allowing nodes within a graph to assert themselves as leaders over their immediate neighbors so as to help with the allocation of resources. This is of particular importance in the distributed setting, where nodes likely do not have full information of the complete graph. The best known version of the independent set problem is the \emph{maximal independent set} (MIS) problem. Informally, in non-hypergraphs, an independent set is a subset of vertices such that no pair of vertices belong to an edge in the graph. Such a set is maximal if every vertex either belongs to to the set, or has some neighbor in the set. As this property is \emph{locally checkable} \cite{linial1992locality}, each vertex in the graph can determine whether it satisfies this condition by considering only the local neighborhood. In other words, once an algorithm to find an MIS has terminated, each vertex can determine if the algorithm was locally successful by checking if it belongs to the independent set, and if any of its neighbors do.\looseness=-1

Due to the fundament nature of the problem, finding independent sets in distributed graphs is one of the most heavily studied problems in distributed computing. Starting with the classic algorithm due to Linial \cite{linial1987distributive,linial1992locality}, there has been a successive series of improvements in terms of both deterministic \cite{awerbuch1989network,barenboim2016locality,panconesi2001some,barenboim2009distributed} and randomized \cite{kutten2014distributed,kuhn2018efficient,ghaffari2016improved,ghaffari2021improved} algorithms. At the same time, there has been a significant body of work determining the lower bound on the number of rounds required for distributed algorithms in general \cite{brandt2016lower,brandt2019automatic}, and finding a MIS in particular \cite{balliu2022distributed,balliu2021improved,balliu2021lower}. As it stands, the current state-of-the-art for solving MIS deterministically, in terms of the maximum degree of the graph, matches the lower bound, giving an $O(\Delta + \log^* n)$ round algorithm for finding an MIS in an $n$ node graph, with a maximum degree of $\Delta$.\looseness=-1

There are various ways of generalizing the MIS problem to the hypergraph setting. The best studied is the \emph{weak-MIS problem}. A set $S$ is a \emph{weak-MIS} if no hyperedge in the graph is a subset of $S$. This is the primary version studied in the distributed setting, with both positive, algorithmic results \cite{kutten2014distributed} and lower bounds \cite{Balliu2023-distributed}. In this paper, we consider two generalization of the independent set problem to hypergraphs, $k$-weak independent sets, and $(\alpha, \beta)$-independent sets. A $k$-weak independent set is a set $S$ such that the intersection between any edge in the graph and $S$ has size at most $k$. Such a set is maximal if every vertex in the hypergraph either belongs to $S$, or belongs to some edge containing $k$ other vertices in $S$. An $(\alpha, \beta)$-independent set is a $\beta$-weak independent set $S$, where each vertex is either in $S$, or belongs to at least one edge containing $\alpha$ vertices in $S$. Note that a $(k, k)$-independent set is thus a $k$-weak MIS. We note that both definitions are locally checkable, in the same manner as above.\looseness=-1

Of particular interest to our paper are the works of Kutten et al. \cite{kutten2014distributed}, Kuhn and Zheng \cite{kuhn2018efficient}, and Balliu et al. \cite{Balliu2023-distributed}, all of which analyze the MIS problem on hypergraphs. In \cite{kutten2014distributed}, Kutten et al provide a $O(\log^2 n)$ round randomized algorithm for solving maximal independent sets within hypergraphs containing $n$ nodes. This is done by way of a network decomposition, partitioning the hypergraph into a collection of low diameter components, i.e. components for which the distance between any pair of nodes is minimized, in this case $O(\log n)$. Once partitioned, each component may ``centralize'' the topology of the local neighborhood into a single node, that can then solve the problem, broadcasting the solution to the other nodes in its component. Building upon this, Kuhn and Zheng\cite{kuhn2018efficient} provide an $O(\log n)$ round algorithm for finding an MIS in a \emph{linear} hypergraph, a hypergraph where no pair of edges share more than a single common vertex. They further introduce and provide an $O(\log^2 n)$ round algorithm for the \emph{generalized MIS problem}, a problem roughly equivalent to our $k$-weak MIS problem, but with a variable for each edge determining the maximum number of vertices in the edge that may belong to the set. Finally, in \cite{Balliu2023-distributed}, Balliu et al. provide a pair of deterministic algorithms for finding an MIS in a hypergraph. Given a hypergraph of rank $r$, maximum degree $\Delta$, and $n$ nodes, the authors show that an MIS can be found in both $O(\Delta^2 \log r + \Delta \log r  \log^* r + \log^* n)$ rounds, and in $O(\Delta^{O(\Delta)} \log^* r + O(\log^* n)$ rounds.\looseness=-1

We finally note two tools that can be applied directly to get naive solutions to the problems of finding a $k$-weak MIS, and an $(\alpha, \beta)$-IS. On one hand is the recent graph decomposition algorithm of Ghaffari and Grunau \cite{ghaffari2024nearoptimaldeterministicnetworkdecomposition}, allowing a deterministic graph decomposition in $\Tilde{O}(\log^2 n)$ rounds into components of diameter $O(\log n)$ using $O(\log n)$ colors, and thus either problem to be solved in $O(\log^2 n)$ rounds. On the other, there is the mention the $O(\Delta)$-coloring algorithm of Maus and Tonoyan~\cite{maus2020local}, finding such a coloring in $O(\sqrt{(\Delta) (\log \Delta )} + \log^* n)$ rounds. By finding such a coloring on the underlying graph (with maximum degree $\Delta r$), we can iterate over each color class, allowing vertices to add themselves to the set without risk of conflict in $O(\Delta r)$ rounds.\looseness=-1

\subsection{Our Contributions}

While $k$-weak MIS was (implicitly) introduced in~\cite{kuhn2018efficient}, we believe our definition of $(\alpha,\beta)$-independent set is novel. Our primary formal contributions consist of upper and lower-bounds for finding independent sets in hypergraphs in the LOCAL model. 

For lower bounds, in Section~\ref{sec:lower-bounds} we show the following:
\begin{itemize}
    \item Any algorithm for finding a $1$-weak MIS requires $\Omega(\Delta + r + \log^* n)$ rounds (Theorems~\ref{thm:one_lb} and~\ref{thm:delta-lb}).
    \item For any odd value $k$, finding a $k$-MIS in a hypergraph of rank $2k$ requires $\Omega(\Delta + \log^* n)$ rounds (Theorem~\ref{thm:delta-lb-k}).
\end{itemize}

For upper bounds, we provide a collection of deterministic and randomized algorithms solving several cases of $(\alpha, \beta)$-IS and $k$-weak MIS. In all of these results, we assume that $H$ is a hypergraph on $n$ vertices with rank $r$ and maximum degree $\Delta$, and that $\alpha$ and $\beta$ satisfy $1 \leq \alpha \leq \beta < r$.
\begin{itemize}
    \item In Section~\ref{sec:lll} we provide an LLL formulation for finding an $(\alpha, \beta)$-IS when $(\beta - \alpha)^2 / (\beta + \alpha) \geq 6 \log(16 r \Delta)$, allowing a deterministic $O(\log^2 n)$ round deterministic algorithm (Corollary~\ref{cor:lll}). Using this formulation, we show that when $(\beta - \alpha)^2 / (\beta + \alpha) \geq 6 \log n + 6$, an $(\alpha,\beta)$-IS can be found in zero rounds with high probability (Corollary~\ref{cor:zero-round}).
    \item In Section~\ref{sec:high-rank}, for hypergraphs of sufficiently high rank and low amount of intersection, we present a $0$-round algorithm producing an $m$-weak MIS with expected size of $m \geq r / 2k$ for some chosen $k$ (Lemma~\ref{lem:high-rank}).
    \item In Section~\ref{sec:alpha-beta-mis-alg}, we describe a deterministic algorithm for finding an $(\alpha, \beta)$-IS in $O(\Delta r / (\beta - \alpha + 1) + \log^* n)$ rounds (Theorem~\ref{thm:edgepartitionIS}). This result generalizes the ``trivial'' algorithm of Lemma~\ref{lem:delta-r-algorithm} for finding a $k$-weak MIS, which corresponds to this result when $k = \alpha = \beta$. 
    \item In Section~\ref{sec:k-mis} we provide a $O(\Delta^2(r - k) \log r + \Delta \log r \log^* r + \log^* n)$ round algorithm for finding a $k$-weak MIS (Theorem~\ref{thm:k-weak-mis}). This result generalizes a result of Balliu et al.~\cite{Balliu2023-distributed}, who give an algorithm with similar running time in the case that $k = r - 1$. 
    \item Finally, in Section~\ref{sec:ruling-sets} we provide an algorithm for finding a $(2, k)$-Ruling Set in $O(\Delta(k + (2^{k - 1}/3^{k - 1})r) + \log^* n)$ rounds (Corollary~\ref{alg:FindRS}).
\end{itemize}
\looseness=-1

\section{Preliminaries}

Let $H = (V, E)$ be a hypergraph consisting of the set of vertices, $V$, and edges $E$. Each edge in $E$ contains some subset of vertices from $V$. The \emph{rank} of an edge $e$ is the number of vertices in the edge. The \emph{rank} of the graph is equal to the maximum rank of any edge in $E$. The \emph{degree} of a vertex $v$ is the number of edges in $E$ containing $v$. The maximum degree of $H$ is the maximum degree of any vertex in $v$. By convention, we denote the rank of a hypergraph by $r$, and the degree by $\Delta$. Given a pair of vertices $v, u$ in $V$, the \emph{distance} between $v$ and $u$, denoted $dist(v, u)$ is the minimum number of edges needed to form a contiguous path between $v$ and $u$. For example, $dist(v, v) = 0$, $dist(v, u) = 1$ if and only if $\exists e \in E$ such that $\{v, u\} \subseteq e$.\looseness=-1

The \emph{underlying graph} of the hypergraph $H = (V, E)$ is the graph $G = (V, E')$ formed by replacing each edge $e \in E$ with a clique containing every vertex in $e$, i.e. $E' = \{(v, u) \mid \exists e \in E$ such that $(u, v) \subseteq e \}$. Given a subset of vertices $S \subseteq V$, the graph \emph{induced} by $S$ is the graph $H' = \{S, E'\}$ where $E' = \{e \cap S \mid e \in E\}$, i.e. the graph formed by removing every vertex in $V \setminus S$.

An \emph{independent set} $S$ in a hypergraph $(V, E)$ is a subset of vertices where, for any pair $u, v \in S$, where $u \neq v$, $\nexists e \in E$ such that $(u, v) \in e$. A $\psi$-\emph{coloring} $\mathbf{\Psi}$ of a hypergraph $(V, E)$ is a mapping of the set of vertices to some set of $\psi$-colors, assumed to be the set of integers $1, 2, \dots, \psi$. A $\psi$-\emph{coloring} is \emph{valid} if $\forall v, u \in V$ where $v \neq u$, either $\nexists e \in E$ such that $(v, u) \in e$ or $\mathbf{\Psi}(v) \neq \mathbf{\Psi}(u)$.\looseness=-1

\begin{definition}[$k$-weak IS]
    Given a hypergraph $G = (V, E)$ of rank $r$, a \emph{$k$-weak independent set}, denoted $k$-weak IS, is a subset $V' \subseteq V$ such that, for every $e \in E$, $\vert e \cap V' \vert \leq k$, i.e. no edge contains more than $k$ vertices in $V'$. A $k$-weak independent set $V'$ is \emph{maximal} (a $k$-weak MIS) if no super set of $V'$ is a $k$-weak independent set, i.e. each vertex not in $V'$ is adjacent to at least one edge containing $k$ members of $V'$.
\end{definition}

\begin{problem}\label{prob:weak_ind_set}
    In a given communication model, what is the minimum number of rounds needed to find a \emph{maximal $k$-weak independent set} in the graph $G$ (with high probability)?
\end{problem}

In the case that some edge $e \in E$ has rank less than or equal to $k$, it is possible that every vertex in $e$ belongs to the independent set. Further, a $1$-weak independent set on a hypergraph $H$ corresponds to the traditional definition of an independent set on the underlying graph. For convenience, in non-hypergraphs, i.e. graphs of rank $2$, we denote $1$-weak maximal independent sets as simply MIS.

We now define a generalized version of independent sets, $(\alpha, \beta)$-weak independent sets.

\begin{definition}[$(\alpha, \beta)$-IS]
    \label{def:alpha_beta_sets}
    Given a hypergraph $H = (V, E)$ of rank $r$, an \emph{$(\alpha, \beta)$-weak independent set}, denoted $(\alpha, \beta)$-IS, is a subset $V' \subseteq V$ such that, for every $e \in E$, $\vert e \cap V' \vert \leq \beta$, and every vertex $v$ belongs to at least one edge $e$ such that $\vert e \cap V' \vert \geq \alpha$.
\end{definition}

\begin{problem}
    \label{prob:a_b_is}
    Given a hypergraph $H = (V, E)$ of rank $r$, and pair $\beta, \alpha \in \mathbb{N}$ such that $\beta \geq \alpha$, what is the minimum number of rounds needed to find an $(\alpha, \beta)$-IS (with high probability)?
\end{problem}

Note that a $(\alpha, \beta)$-weak independent set is, by definition, an $\beta$-weak independent set, though not necessarily maximal. On the other hand, an $(\alpha, \alpha)$-IS is an $\alpha$-MIS.

A related by distinct problem is that of finding an \emph{$(a, b)$-ruling set}. Given a graph $G = (V, E)$, the set $S \subseteq V$ is an $(\alpha, \beta)$-ruling set if:
\begin{itemize}
    \item given any pair $v, u \in S$ where $v \neq u$, $dist(v, u) \geq a$, and,
    \item given any $v \in V$, $\exists u \in S$ such that $dist(v, u) \leq b$.
\end{itemize}

In this paper we will briefly touch upon the problem of finding a $(2, O(\log r))$-ruling set within a hypergraph of rank $r$.

\subsection{Model of computing}

In this paper, we primarily consider the LOCAL model of distributed computing. In this model, each vertex in the graph is assigned a unique ID, and is aware of the local neighborhood, i.e. the edges it belongs to and the IDs of the vertices in each edge. Each round consists of a period of computation, followed by each vertex sending some message to each neighbor. We do not place any bound on the size of the messages in this model. We note that some cited work uses the CONGEST model, a restriction of the LOCAL model where each vertex may only send messages of size $O(\log n)$. 

\subsection{Trivial Reductions}
\label{sec:trivial-algorithms}

Before presenting our main results, we outline some trivial approaches based on using existing deterministic and randomised algorithms on the underlying graph.

\begin{corollary}[3.6, \cite{ghaffari2021improved}]
\label{col:fast_randomised_lin}
    There is a randomized distributed algorithm that computes a maximal independent set in $O(\log \Delta \cdot \log \log n + \log^9 \log n)$ rounds of the CONGEST model, with high probability.
\end{corollary}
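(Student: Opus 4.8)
The plan is to follow the standard \emph{shattering} paradigm for randomized MIS: a short randomized ``pre-shattering'' phase that decides almost all vertices, a probabilistic argument showing that the undecided vertices fall into small components, and a deterministic ``post-shattering'' phase that cleans up those components. The two round-complexity terms in the statement correspond exactly to these two phases, and the main work is arguing that each phase can be realised under the $O(\log n)$-bit bandwidth restriction of CONGEST.

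First I would run Ghaffari's randomized greedy MIS procedure for $\Theta(\log \Delta)$ rounds. Each vertex $v$ maintains a \emph{desire level} $p_t(v) \in (0,1]$, initialized to $1/2$; in round $t$ it marks itself with probability $p_t(v)$, joins the MIS if no neighbor is simultaneously marked, and otherwise updates $p_{t+1}(v)$ by halving or doubling it according to whether the total marking probability $\sum_{u \sim v} p_t(u)$ in its neighborhood exceeds a fixed constant threshold. The key lemma is that after $\Theta(\log \Delta)$ rounds every vertex is \emph{removed} (placed in the MIS or adjacent to a member of it) with probability $1 - 1/\operatorname{poly}(\Delta)$. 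In LOCAL each such round costs $O(1)$; the extra $\log \log n$ factor on this term reflects the cost of faithfully simulating the per-round coordination---marking, conflict detection, and the neighborhood-sum computation---under the CONGEST bandwidth constraint.

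Second, I would invoke the shattering lemma to control the residual graph $H'$ induced by the still-undecided vertices. Because each vertex survives with probability $1/\operatorname{poly}(\Delta)$ and the survival events depend only on a constant-radius neighborhood, a Beck-style argument shows that, with high probability, every connected component of $H'$ has at most $\operatorname{poly}(\Delta) \cdot \log n$ vertices. This is the step that converts a global $n$-dependence into a purely local one, and it is what makes the post-shattering cost depend on $\log \log n$ rather than $\log n$.

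Finally, I would solve MIS exactly on each residual component deterministically, using the near-optimal deterministic network decomposition of Rozhon and Ghaffari together with its CONGEST refinement. A component on $N' = \operatorname{poly}(\Delta)\log n$ vertices is handled in $\operatorname{poly}(\log N')$ rounds, which is $\operatorname{poly}(\log \log n)$; tracking the exponent of the current best CONGEST decomposition yields the $\log^9 \log n$ term. Summing the two phases gives the claimed $O(\log \Delta \cdot \log \log n + \log^9 \log n)$ bound. I expect the main obstacle to be the faithful CONGEST implementation of both phases: the neighborhood-probability estimation in the pre-shattering phase and the deterministic decomposition in the post-shattering phase must both operate with $O(\log n)$-bit messages, and it is precisely the overhead of these bandwidth-restricted simulations that accounts for the extra $\log \log n$ factor and the large polylog-of-polylog exponent.
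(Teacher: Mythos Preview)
The paper does not prove this statement at all: it is quoted directly as Corollary~3.6 of~\cite{ghaffari2021improved}, and the only accompanying text is a one-sentence remark that the result ``is based on finding a graph decomposition.'' There is therefore no paper-side argument to compare your proposal against.

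For what it is worth, your outline---Ghaffari's desire-level pre-shattering, a shattering lemma leaving $\operatorname{poly}(\Delta)\log n$-size components, and a deterministic post-shattering cleanup via Rozho\v{n}--Ghaffari-style network decomposition---is indeed the mechanism behind the cited bound. One detail is off: you attribute the extra $\log\log n$ factor in the first term to CONGEST overhead in simulating the pre-shattering rounds, but Ghaffari's per-round update (broadcast a single probability, sum the neighbors' values, mark and check for conflicts) is already CONGEST-native at $O(1)$ messages per edge per round, so pre-shattering alone costs $O(\log\Delta)$ in CONGEST as well. The $\log\Delta\cdot\log\log n$ term arises from the structure of the improved algorithm in~\cite{ghaffari2021improved} rather than from bandwidth simulation; the CONGEST cost is what drives the $\log^9\log n$ exponent in the post-shattering phase.
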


Corollary \ref{col:fast_randomised_lin} is based on finding a \emph{graph decomposition}, a decomposition of the graph into a number of small, disconnected components, allowing the problem to be solved by centralising the information of each component in some vertex which computes a valid solution, then shares this solution with the remaining vertices in the graph. We can adapt this to the hypergraph setting to get the following.

\begin{lemma}
\label{lem:fast_randomised_hyper}
    There is a randomized distributed algorithm that computes a $k$-maximal independent set in $O(\log \Delta r \cdot \log \log n + \log^9 \log n)$ rounds of the CONGEST model, with high probability for a hypergraph of rank $r$ with maximum degree $\Delta$.
\end{lemma}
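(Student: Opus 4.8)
The plan is to run the decomposition-based algorithm underlying Corollary~\ref{col:fast_randomised_lin} on the \emph{underlying graph} $G = (V, E')$ of $H$, substituting the maximum degree $\Delta$ in the original round bound with the maximum degree of $G$, and replacing the per-cluster MIS subroutine with a per-cluster $k$-weak MIS subroutine.

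First I would bound the maximum degree of $G$. Each vertex $v$ lies in at most $\Delta$ hyperedges, and each such hyperedge contributes at most $r - 1$ neighbours of $v$ in $G$, so the maximum degree of $G$ is at most $\Delta(r - 1) < \Delta r$. The randomized MIS algorithm of Corollary~\ref{col:fast_randomised_lin} first computes a network decomposition of the communication graph into low-diameter clusters and then resolves the (locally checkable) MIS constraints cluster by cluster. Crucially, the round complexity $O(\log \Delta \cdot \log \log n + \log^9 \log n)$ depends on the graph only through its maximum degree and its number of vertices; the $\log \log n$ and $\log^9 \log n$ terms arise from the decomposition and the deterministic cleanup and are independent of the degree. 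Running the same procedure on $G$ therefore costs $O(\log(\Delta r) \cdot \log \log n + \log^9 \log n)$ rounds, which is exactly the claimed bound.

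Next I would argue that the per-cluster solve can produce a $k$-weak MIS rather than an ordinary MIS. Since $k$-weak independence is locally checkable --- each hyperedge $e$ imposes the constraint $\vert e \cap V' \vert \leq k$, and maximality is the local condition that every excluded vertex lies in some edge already containing $k$ members of $V'$ --- a cluster, knowing its own topology together with the decisions already committed at its boundary, can extend that partial assignment to a valid $k$-weak MIS on the cluster. Processing clusters in the order prescribed by the decomposition's coloring guarantees that, when a cluster is committed, the information needed to check its constraints against previously decided vertices is available.

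The main obstacle I anticipate is handling hyperedges that straddle several clusters, together with the message-size restriction of the CONGEST model. A hyperedge $e$ may have its vertices distributed across clusters that are resolved at different times, so the running count $\vert e \cap V' \vert$ and the maximality witnesses for its vertices depend on decisions made outside the cluster currently being processed. The remedy is that each such count is bounded by the rank $r$, so for every boundary edge it suffices to forward the number of its vertices already placed in $V'$; this is $O(\log r) = O(\log n)$ bits and hence admissible in CONGEST. Verifying that this bookkeeping keeps the global edge-membership counts consistent --- and that maximality for an excluded vertex is certified exactly once all of its incident edges are fully committed --- is the step that requires the most care, and is where I would concentrate the detailed argument.
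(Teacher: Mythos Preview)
Your proposal matches the paper's approach: the paper does not give a detailed proof of this lemma but simply asserts that the decomposition-based algorithm behind Corollary~\ref{col:fast_randomised_lin} ``can be adapted to the hypergraph setting,'' and your write-up is precisely that adaptation---pass to the underlying graph of degree at most $\Delta r$, invoke the same decomposition machinery, and replace the per-cluster MIS computation by a per-cluster $k$-weak MIS computation. Your added discussion of boundary hyperedges and CONGEST bookkeeping goes beyond what the paper spells out but is consistent with the intended argument.
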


In the deterministic setting, we can use the algorithm of Ghaffari and Grunau~\cite{ghaffari2024nearoptimaldeterministicnetworkdecomposition} to decompose the graph in $\Tilde{O}(\log^2 n)$ rounds into components of diameter $O(\log n)$ using $O(\log n)$ colors, i.e. a coloring of the graph with $O(\log n)$ colors such that each color class induces components of diameter at most $O(\log n)$.

\begin{lemma}
\label{lem:deterministic_decomposition}
    There is a deterministic distributed algorithm that computes a $k$-weak maximal independent set in $O(\log^2 n)$ rounds of the CONGEST model for a hypergraph of rank $r$ with maximum degree $\Delta$.
\end{lemma}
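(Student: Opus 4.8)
The plan is to reduce the problem to the known deterministic network decomposition result of Ghaffari and Grunau, applied to the \emph{underlying graph} $G$ of the hypergraph $H$, and then solve the $k$-weak MIS problem locally within each low-diameter component. First I would construct $G = (V, E')$ as in the Preliminaries, replacing every hyperedge by a clique; note that any two vertices lying in a common hyperedge of $H$ become adjacent in $G$, so distances in $G$ are exactly the hypergraph distances $\mathit{dist}(v,u)$, and a single round of communication in $H$ corresponds to a constant number of rounds in $G$ (each vertex learns the IDs and memberships of its hyperedges, hence its $G$-neighborhood). Running the Ghaffari--Grunau algorithm on $G$ yields, in $\tilde{O}(\log^2 n)$ rounds, a coloring with $O(\log n)$ color classes such that each class induces connected components of diameter $O(\log n)$ in $G$.

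Next I would process the color classes sequentially, maintaining a partial $k$-weak independent set $S$ that is always a valid $k$-weak IS (no edge contains more than $k$ members of $S$). In iteration $i$, every still-undecided vertex of color $i$ should commit to being either in or out of $S$. Because each component of color class $i$ has diameter $O(\log n)$ in $G$, and the hyperedges touching that component are a local object, a designated leader of the component can gather the entire local topology --- the induced subhypergraph together with the already-decided status of all relevant vertices and the current counts $\abs{e \cap S}$ for every incident edge $e$ --- in $O(\log n)$ rounds. The leader then computes greedily (or by brute force, since the model places no bound on local computation) a maximal extension of $S$ within this component that keeps every edge's intersection with $S$ at most $k$, and broadcasts the decisions back to the component in $O(\log n)$ rounds. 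Summing over the $O(\log n)$ color classes, each costing $O(\log n)$ rounds, gives the claimed $O(\log^2 n)$ bound (dominating the $\tilde O(\log^2 n)$ decomposition cost up to the same order).

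The final step is to verify correctness: that the set $S$ produced after all color classes are processed is indeed a $k$-weak \emph{maximal} independent set. The $k$-weak IS property is an invariant maintained by construction, since no leader ever commits a vertex to $S$ if doing so would push some incident edge past $k$ members. For maximality I must argue that after the last iteration every vertex $v \notin S$ is blocked, meaning $v$ lies in some edge $e$ with $\abs{e \cap S} = k$; this holds because when $v$'s own color class was processed, its leader added $v$ to $S$ unless some incident edge was already saturated at $k$, and once an edge is saturated it stays saturated as $S$ only grows.

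The main obstacle I expect is the subtlety of maintaining maximality across \emph{distinct} color classes rather than within a single component. A greedy choice that is maximal within one component may later turn out non-maximal relative to vertices of another color sharing a hyperedge, or conversely a vertex left out in an early class because of a later-colored neighbor must genuinely end up blocked. To handle this cleanly I would process vertices so that the ``blocked'' certificate is monotone: once an edge reaches $k$ members it never loses them, so a vertex correctly excluded stays correctly excluded, and a vertex whose incident edges are all still below $k$ when its turn comes must be added. This monotonicity argument, together with the fact that the communication model is the CONGEST variant (which requires verifying that the leader's gathered information and broadcast decisions fit within $O(\log n)$-bit messages per round over $O(\log n)$ rounds), is the technically delicate part, but it mirrors the standard correctness argument for decomposition-based MIS algorithms and should go through without essential difficulty.
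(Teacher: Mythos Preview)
Your proposal is correct and follows essentially the same approach the paper sketches: the paper does not give a formal proof of this lemma but simply points to the Ghaffari--Grunau network decomposition (stated in the paragraph preceding the lemma and again in the introduction), with the implicit ``decompose, then solve each low-diameter component color-by-color'' argument that you spell out in detail. Your monotonicity argument for maximality across color classes is the standard one and is exactly what the paper leaves implicit; the only point on which you go beyond the paper is flagging the CONGEST bandwidth issue for the centralization step, which the paper itself does not address.
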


Finally, for small values of $\Delta$ and $r$, we can find an $O(r \Delta)$ coloring of the underlying graph, then iterate through each color class, adding each vertex to the independent set provided the set remains $k$-weak.

\begin{lemma}\label{lem:delta-r-algorithm}
    There is a deterministic distributed algorithm the computes a $k$-weak maximal independent set in $O(\Delta r + \log^*n)$ rounds of the LOCAL model.
\end{lemma}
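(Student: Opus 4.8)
The plan is to reduce the problem to sequential vertex insertion over a proper coloring of the underlying graph $G$, processing one color class per round so that within each round no two vertices that share an edge ever act simultaneously. First I would compute a valid $O(\Delta r)$-coloring of $G$. Since each vertex $v$ of $H$ lies in at most $\Delta$ hyperedges, each of rank at most $r$, the degree of $v$ in the underlying graph $G$ is at most $\Delta(r-1) = O(\Delta r)$; by Linial's algorithm (together with the standard $O(\Delta)$-coloring refinement for bounded-degree graphs) one obtains an $O(\Delta r)$-coloring of $G$ in $O(\Delta r + \log^* n)$ rounds. This accounts for the $\log^* n$ term and part of the $O(\Delta r)$ term in the claimed bound.

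Next I would iterate through the color classes $1, 2, \dots, O(\Delta r)$, one class per round. When class $c$ is processed, every vertex $v$ colored $c$ examines all edges it belongs to; $v$ adds itself to the tentative set $V'$ precisely when doing so keeps $V'$ a $k$-weak independent set, i.e.\ when every edge $e \ni v$ currently satisfies $\abs{e \cap V'} < k$ before $v$ is added (equivalently $\abs{e \cap V'} \le k-1$, so that after insertion $\abs{e \cap V'} \le k$). Because the coloring is valid, no two vertices sharing an edge have the same color, so the vertices acting in a single round are pairwise non-adjacent in $G$ and hence lie in disjoint sets of hyperedges in the sense that relevant to the count: no edge receives two new members in the same round. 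This guarantees each insertion is evaluated against a consistent, up-to-date view of $V'$, and the resulting set is $k$-weak by construction. Each vertex only needs information within its own hyperedges, i.e.\ within distance~$1$, so a single communication round per color class suffices, giving $O(\Delta r)$ rounds for the insertion phase.

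For correctness of \emph{maximality}, I would argue that when the algorithm terminates, no vertex can be added without violating the $k$-weak property. Consider any vertex $v \notin V'$. At the round in which $v$'s color was processed, $v$ declined to join, which can only happen because some edge $e \ni v$ already had $\abs{e \cap V'} = k$ at that moment. Since vertices are only ever added to $V'$ and never removed, that edge still contains at least $k$ members of $V'$ at termination, so $v$ is adjacent to an edge with $k$ members of $V'$, matching the definition of a $k$-weak MIS. Combining the coloring phase and the insertion phase yields the total bound of $O(\Delta r + \log^* n)$ rounds in the LOCAL model.

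The main obstacle I anticipate is the maximality verification rather than the $k$-weak feasibility: one must ensure that a vertex which declines at its designated round genuinely remains blocked forever, which relies crucially on the monotonicity of $V'$ (insertions only) together with the fact that a full edge stays full. A secondary subtlety is confirming that within a single color class the simultaneous insertions cannot jointly overfill an edge; this is exactly where the validity of the coloring is used, since any edge containing two candidate vertices of the same color would be a counterexample, and no such edge exists.
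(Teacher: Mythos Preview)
Your proposal is correct and follows essentially the same approach the paper sketches: compute an $O(\Delta r)$-coloring of the underlying graph (whose maximum degree is at most $\Delta(r-1)$) and then iterate over color classes, greedily inserting each vertex into $V'$ whenever this preserves the $k$-weak property. The only cosmetic difference is that the paper alludes to the Maus--Tonoyan coloring algorithm rather than Linial plus color reduction, but either choice yields the stated $O(\Delta r + \log^* n)$ bound, and your correctness and maximality arguments fill in exactly the details the paper leaves implicit.
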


\section{Lower Bounds for weak MIS}
\label{sec:lower-bounds}

Before providing our solutions to Problem \ref{prob:weak_ind_set}, we first provide a pair of lower bounds of $\Omega(r + \log^* n)$ and $\Omega(\Delta + \log^* n)$ for deterministic algorithms solving $1$-weak MIS, and a $\Omega(\Delta + \log^* n)$ lower bound for finding a $k$-MIS in a hypergraph of rank $2k$, for an odd value of $k$.\looseness=-1

% \paragraph*{An $\Omega(r + \log^* n)$ Lower Bound for $1$-weak MIS}
\paragraph*{Lower Bounds in $r$.}

\begin{theorem}\label{thm:one_lb}
    Consider the family of $r$-uniform hypergraphs with maximum degree $\Delta$ such that $r \Delta \log (r \Delta) = O(\log n)$. Then in this family, computing a $1$-weak MIS requires $\Omega(r)$ rounds in the local model.
\end{theorem}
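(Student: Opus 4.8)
The plan is to first reduce the problem to an ordinary maximal independent set question on the underlying graph. Recall from the preliminaries that a $1$-weak MIS of $H$ is precisely a maximal independent set of the underlying graph $G = G(H)$: a hyperedge contains at most one chosen vertex exactly when no two chosen vertices are $G$-adjacent, and maximality in $H$ coincides with maximality in $G$. Moreover, a single round in the hypergraph LOCAL model reveals, for each vertex, every hyperedge it lies in together with the identifiers of all their vertices --- that is, exactly the $G$-neighborhood of the vertex. Hence $\mathrm{dist}_H = \mathrm{dist}_G$, and a $t$-round hypergraph algorithm for $1$-weak MIS is the same object as a $t$-round LOCAL algorithm computing an MIS on $G$. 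It therefore suffices to produce, for every admissible triple $(r,\Delta,n)$ with $r\Delta\log(r\Delta) = O(\log n)$, an $r$-uniform hypergraph of maximum degree $\Delta$ on $\Theta(n)$ vertices whose underlying graph forces $\Omega(r)$ rounds for MIS.

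The first thing to observe is what will \emph{not} work: ``stretching'' a path or cycle into a chain of overlapping rank-$r$ cliques only rescales distances by a constant, and solving MIS on the resulting blown-up path reduces to MIS on a path, which costs only $\Theta(\log^* n)$ rounds. The $\Omega(r)$ barrier must instead come from symmetry that is concentrated within a ball of radius $o(r)$ yet provably cannot be broken there --- the same phenomenon responsible for the $\Omega(\Delta)$ lower bound for MIS on bounded-degree graphs \cite{balliu2022distributed,balliu2021lower}. Accordingly, I would build the underlying graph $G$ to be a hard MIS instance of maximum degree $\Theta(r)$ and cover its edges by $r$-cliques so that each vertex lies in at most $\Delta$ of them; taking $\Delta$ constant already yields underlying degree $\Theta(r)$, and the clique cover is exactly an $r$-uniform, degree-$\Delta$ hypergraph realizing $G$. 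The gadget is to be engineered with \emph{no slack}: every vertex sits in a locally symmetric configuration, so that, unlike a single isolated clique (where the minimum identifier resolves membership in one round), the correct membership of a vertex depends on resolving a conflict that propagates through $\Omega(r)$ layers of the cover.

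With such a family in hand, the lower bound itself is a fooling (indistinguishability) argument. Suppose an algorithm $A$ solves $1$-weak MIS in $t = o(r)$ rounds. Since the radius-$t$ ball of any vertex contains at most $(r\Delta)^{t+1}$ vertices, and the hypothesis $r\Delta\log(r\Delta) = O(\log n)$ guarantees this is far smaller than $n$ for every $t$ up to a small constant times $r$, there is room to pack many vertex-disjoint copies of the gadget and, by a Ramsey / order-invariance reduction in the spirit of Naor--Stockmeyer and Linial, to reassign identifiers so that $A$ becomes \emph{order-invariant} on the relevant $t$-balls; the exponential room supplied by the $n$-condition is exactly what the Ramsey extraction consumes. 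An order-invariant $t$-round algorithm must then return identical decisions at two vertices that occupy symmetric positions of the gadget and have order-isomorphic radius-$t$ views, whereas every valid MIS of $G$ is forced to separate them. This contradiction rules out $t = o(r)$.

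The step I expect to be the crux is the construction of the gadget together with the proof that its radius-$o(r)$ views are genuinely insufficient: one must exhibit, inside a bounded-radius region, two extensions of the same local view that force opposite MIS decisions, and verify that \emph{every} $r$-clique cover constraint and the maximality requirement are met simultaneously while keeping the rank exactly $r$ and the degree $\Delta$. This is precisely where the slack-removal and the bookkeeping of the round-elimination / fooling recursion live; the remaining combinatorics (ball-size estimates, packing, and the Ramsey bound) are routine given the $n$-condition. The resulting $\Omega(r)$ bound matches, up to the additive $\log^* n$ and the factor $\Delta$, the trivial upper bound of Lemma~\ref{lem:delta-r-algorithm}, showing that dependence on the rank is unavoidable.
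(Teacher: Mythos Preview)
Your proposal correctly observes that a $1$-weak MIS of $H$ is exactly an MIS of the underlying graph $G(H)$, but the argument then stalls at its declared crux: you propose to \emph{build} a hard MIS instance of degree $\Theta(r)$ and cover its edges by $r$-cliques, yet no such gadget is actually produced, and you yourself flag the construction and its indistinguishability analysis as the part left to do. This is not a minor omission. The $\Omega(\Delta_G)$ lower bounds for graph MIS you invoke are obtained by round elimination on \emph{trees}; for $r \geq 3$ the underlying graph of any $r$-uniform hypergraph is a union of $r$-cliques and hence contains triangles, so those hard instances cannot arise as $G(H)$. Producing a new clique-covered family on which MIS provably needs $\Omega(r)$ rounds would itself be a lower-bound result of strength comparable to the theorem; nothing in the proposal indicates how to carry it out, and the Ramsey/order-invariance machinery you sketch only kicks in \emph{after} such a gadget exists.

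The paper sidesteps this obstacle entirely by reducing in the opposite direction: it shows that a $T$-round $1$-weak MIS algorithm yields an $O(\Delta T)$-round algorithm for hypergraph \emph{maximal matching}. One iterates $\Delta$ times: compute a $1$-weak MIS $S_i$ in the current hypergraph, let each $v \in S_i$ select one incident hyperedge into the matching, and delete all vertices covered by selected edges together with their incident hyperedges. Each pass lowers the maximum degree by at least one, so after $\Delta$ passes no edges remain and the union of the selected hyperedges is a maximal matching. Invoking the $\Omega(r\Delta)$ lower bound for hypergraph maximal matching of Balliu et al.~\cite{Balliu2023-distributed} (Theorem~\ref{thm:mm_lb}), valid precisely under the hypothesis $r\Delta\log(r\Delta)=O(\log n)$, gives $\Delta T = \Omega(r\Delta)$ and hence $T = \Omega(r)$. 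This is a short black-box reduction that leverages an existing \emph{hypergraph} lower bound rather than trying to force graph-MIS hardness onto a clique-rich topology.
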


The idea of the proof of Theorem~\ref{thm:one_lb} is to give a reduction from hypergraph maximal matching. To this end, we employ the lower bound of~\cite{Balliu2023-distributed}:

\begin{theorem}[Balliu et al.~\cite{Balliu2023-distributed}]
    \label{thm:mm_lb}
    For the family of $r$-uniform hypergraphs with maximum degree $\Delta$ and $r \Delta \log (r \Delta) = O(\log n)$, computing a maximal matching requires $\Omega(r \Delta)$ rounds in the LOCAL model.
\end{theorem}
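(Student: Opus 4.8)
The plan is to prove the bound by \emph{round elimination}, the standard technique for establishing tight $\Omega(\Delta)$-type lower bounds for locally checkable problems such as maximal matching. The first step is to recast hypergraph maximal matching as a locally checkable labeling problem on the bipartite \emph{incidence graph} $I(H)$ of the input hypergraph $H$: the graph $I(H)$ has one node for each vertex of $H$ and one node for each hyperedge of $H$, with an incidence edge joining a hyperedge-node to each of the (up to) $r$ vertex-nodes it contains. In $I(H)$ the hyperedge-nodes have degree at most $r$ and the vertex-nodes have degree at most $\Delta$. A maximal matching then corresponds to selecting a subset of the hyperedge-nodes subject to two local constraints: a \emph{matching} constraint checked at each vertex-node (at most one incident hyperedge is selected) and a \emph{maximality} constraint checked at each hyperedge-node (either it is selected, or one of its vertices already touches a selected hyperedge). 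Every constraint is verified within distance $1$ in $I(H)$, and algorithms on $H$ and $I(H)$ have the same round complexity up to a constant factor, so it suffices to prove the $\Omega(r\Delta)$ bound on $I(H)$.

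Given this encoding, I would run the round-elimination machinery. Round elimination associates to each locally checkable problem $\Pi$ a relaxed problem $\mathrm{re}(\Pi)$, defined via the standard ``set of possible outputs'' construction, with the property that any algorithm solving $\Pi$ in $T$ rounds yields one solving $\mathrm{re}(\Pi)$ in $T-1$ rounds; applying the operator alternately on the two sides of the bipartite graph removes one round per step. Iterating, a hypothetical $T$-round algorithm for maximal matching produces a $0$-round algorithm for the problem $\mathrm{re}^{(T)}(\Pi_{\mathrm{MM}})$, where $\Pi_{\mathrm{MM}}$ is the encoding above. The lower bound then reduces to the combinatorial claim that $\mathrm{re}^{(t)}(\Pi_{\mathrm{MM}})$ admits \emph{no} $0$-round solution (nothing computable from identifiers and port numbers alone) for all $t < c\,r\Delta$, for a suitable constant $c > 0$.

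The quantitative core is to track how the sequence $\Pi_{\mathrm{MM}}, \mathrm{re}(\Pi_{\mathrm{MM}}), \mathrm{re}^{(2)}(\Pi_{\mathrm{MM}}),\ldots$ degrades. I would introduce a progress parameter measuring how far each relaxed problem is from a trivial, $0$-round-solvable fixed point, and argue that a single elimination step decreases it by only a bounded amount relative to $r\Delta$, so that the parameter takes $\Theta(r\Delta)$ distinct values before triviality. The asymmetry between the $r$-side and the $\Delta$-side is what drives the multiplicative bound: because eliminating a round alternates between hyperedge-nodes (degree $r$) and vertex-nodes (degree $\Delta$), and each step can only consume a constant fraction of one unit of the budget governed by the product $r\Delta$, reaching triviality takes $\Omega(r\Delta)$ steps. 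This is precisely the mechanism by which the rank $r$ enters multiplicatively; as a sanity check, setting $r = 2$ recovers the classical $\Omega(\Delta)$ lower bound for maximal matching in ordinary graphs.

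The main obstacle, and the bulk of the work, is the explicit construction and analysis of the relaxed problem family: proving that each $\mathrm{re}^{(t)}(\Pi_{\mathrm{MM}})$ with $t < c\,r\Delta$ has no $0$-round solution, and bounding the per-step descent of the progress parameter while correctly handling the two distinct degrees. Two further points require care. First, the hypothesis $r\Delta\log(r\Delta) = O(\log n)$ is exactly what makes the indistinguishability arguments valid at the relevant scale: within radius $\Theta(r\Delta)$ one must embed the requisite locally tree-like gadget subgraphs carrying all relevant identifier and port configurations, and the number of such local views is at most $(r\Delta)^{O(r\Delta)} = 2^{O(r\Delta\log(r\Delta))} = n^{O(1)}$ by hypothesis, so they fit into an $n$-vertex instance. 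Second, because of this $\log n$ condition, I expect the cleanest argument to pass through the probabilistic variant of round elimination: one tracks a failure probability through each step and uses $r\Delta\log(r\Delta) = O(\log n)$ to ensure the accumulated failure probability stays below $1$ over all $\Theta(r\Delta)$ steps, with the deterministic statement following either directly or via the standard derandomization that adds only an $O(\log^* n)$ term.
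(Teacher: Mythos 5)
The statement you were asked to prove is not actually proved in the paper: it is imported verbatim from Balliu et al.~\cite{Balliu2023-distributed} and used as a black box in the reduction of Theorem~\ref{thm:one_lb}, so the only meaningful comparison is against the original proof in that reference. Your outline correctly identifies its architecture: hypergraph maximal matching is encoded as a locally checkable problem on the bipartite incidence graph, with hyperedge-nodes of degree at most $r$ on one side and vertex-nodes of degree at most $\Delta$ on the other, and the bound is obtained by (probabilistic) round elimination, with the hypothesis $r\Delta\log(r\Delta) = O(\log n)$ serving exactly the role you assign it: instances of degree roughly $r\Delta$ that are tree-like out to radius $\Theta(r\Delta)$ require only $2^{O(r\Delta\log(r\Delta))} = n^{O(1)}$ vertices, and the same condition keeps the accumulated failure probability under control across $\Theta(r\Delta)$ elimination steps, with the deterministic statement following by standard arguments. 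Your sanity check that $r = 2$ recovers the classical $\Omega(\Delta)$ maximal matching bound is also apt.

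Nevertheless, as a proof your text has a genuine gap, and it is the one you yourself flag: everything quantitative is asserted rather than established. Round elimination provides no off-the-shelf ``progress parameter'' that decreases by a bounded amount per step; in general $\mathrm{re}(\Pi)$ has a label set that can grow exponentially at each application, and the entire difficulty of such lower bounds is to exhibit an explicit, compactly described family of relaxations $\Pi_0, \Pi_1, \ldots$, to prove a problem-specific simulation lemma showing that $\mathrm{re}(\Pi_t)$ can be relaxed to $\Pi_{t+1}$, and to prove that no $\Pi_t$ with $t < c\,r\Delta$ admits a $0$-round solution. Your heuristic that ``each elimination step consumes a constant fraction of one unit of the budget $r\Delta$'' is not a principle one can invoke: for many locally checkable problems the analogous sequence collapses after far fewer steps (this is why, e.g., no $\Omega(\Delta)$ round-elimination bound is known for MIS beyond what is proved by bespoke constructions), so the per-step descent bound is precisely the theorem to be proved, not an input to it. In the original paper this invariant-tracking across the alternation between the degree-$r$ side and the degree-$\Delta$ side occupies the bulk of the argument. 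Without that explicit problem family and its analysis, your write-up is a correct and well-informed road map to the known proof, but not a proof.
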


\begin{proof}[Proof of Theorem~\ref{thm:one_lb}]
    Suppose we have an algorithm $A$ that computes a $1$-weak MIS in $T = T(\Delta, r, n)$ rounds. Given a hypergraph $G = G_0$, define a sequence of hypergraphs $G = G_0, G_1, \ldots, G_\Delta$ and sets of hyperedges $M_1, M_2, \ldots, M_\Delta$ as follows. Given $G_{i-1}$ compute $M_i$ and $G_i$ by:\looseness=-1

    \begin{enumerate}
        \item Apply $A$ to compute a $1$-weak MIS $S_i$ in $G_{i-1}$.
        \item For each vertex $v \in S_i$, choose an edge $e_v$ incident to $v$ arbitrarily and add $e_v$ to $M_i$. Note that this can be done in a single round of LOCAL given the output of $A$.
        \item Form $G_i$ by removing from $G_{i-1}$ all vertices $u$ incident to edges in $M_i$ and all edges incident to those vertices.
    \end{enumerate}

    We claim that $M = \bigcup_{i = 1}^\Delta M_i$ is a maximal matching in $G$. 
    
    First observe that each $M_i$ is a matching in $G_{i-1}$. Next, note that if $e \notin M_i$ is an edge removed in Step~3 above, then $e$ intersects some edge $e' \in M_i$, hence $e$ cannot be in any matching that contains the edges from $M_i$.\looseness=-1

    Next consider an edge $e$ in $G_{i}$---i.e., an edge that is neither in $M_i$ nor was it removed in Step~3. Since $S_i$ is a $1$-weak MIS in $G_{i-1}$, it must be the case that every vertex $v \in e$ is also contained in another edge $e'$ that intersects $S_i$ on some vertex $v'$. Since $e$ was not removed in Step~3, $e' \notin M_i$. However, since $v' \in e'$, $e'$ is removed in Step~3. Thus, every vertex $v$ in $G_i$ has at least one incident edge removed from $G_{i-1}$. Combining these observations, we find:

    \begin{itemize}
        \item every edge $e \in G_i$ is disjoint from every edge in $M_1 \cup M_2 \cup \cdots \cup M_i$, and
        \item the maximum degree of $G_i$ is $\Delta - i$.
    \end{itemize}

    By the second point above, $G_\Delta$ is an independent set of vertices. Also, by the first point and the observation that each $M_i$ is a matching, we find that $M = \bigcup_{i=1}^\Delta M_i$ is a matching as well. Finally, $M$ is maximal because every removed edge intersects some edge in $M$.

    The procedure invokes $A$ to find a $1$-weak MIS $\Delta$ times, and after each invocation, only $O(1)$ rounds of communication are performed. Therefore, the total running time of the procedure is $O(\Delta T)$. By Theorem~\ref{thm:mm_lb}, the running time must be $\Omega(r \Delta)$, hence $T = \Omega(r)$. 
\end{proof}

The lower bound of $\Omega(r)$ is perhaps surprising given the upper bounds for $r-1$-weak MIS in~\cite{Balliu2023-distributed} and our Theorem~\ref{thm:k-weak-mis} whose $r$-dependence is sublinear.

% \paragraph*{$\Omega(\Delta + \log^* n)$ Lower Bounds}
\paragraph*{Lower Bounds in $\Delta$}

We compliment our $\Omega(r + \log^* n)$ lower bound with a pair of $\Omega(\Delta + \log^* n)$ lower bounds on finding a MIS in Hypergraphs, complementing the above bounds. First, we show an $\Omega(\Delta + \log^* n)$ lower bound for finding a $1$-weak independent set on the family of hypergraphs of even rank. Secondly, we present an $\Omega(\Delta + \log^* n)$ lower bound for finding a $k$-weak independent set for odd values of $k$ on the family of hypergraphs of rank $2k$. Both are due to a reduction from the $\Omega(\Delta + \log^* n)$ lower bound on finding an MIS on regular graphs due to Balliu et al. \cite{balliu2021lower}.

\begin{theorem}
    \label{thm:delta-lb}
    Consider the family of $r$-uniform hypergraphs with maximum degree $\Delta$ such that $r$ is even. Then in this family, computing a $1$-weak MIS requires $\Omega(\Delta + \log^* n)$ rounds in the local model.
\end{theorem}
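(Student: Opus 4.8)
My plan is to prove Theorem~\ref{thm:delta-lb} by a reduction from the $\Omega(\Delta + \log^* n)$ lower bound on computing an MIS in regular (ordinary) graphs due to Balliu et al.~\cite{balliu2021lower}. The statement is asymptotic in $\Delta$ (and $\log^* n$), so it suffices to exhibit a single representative even rank; I would take $r = 2$ first to get the cleanest reduction, since a $1$-weak MIS in a rank-$2$ hypergraph is exactly an MIS in an ordinary graph, and then explain how to extend to arbitrary even $r$ so that the statement covers the whole family. The key observation, already noted in the preliminaries, is that a $1$-weak IS in a hypergraph is an independent set in the underlying graph, so the structural constraint matches the ordinary MIS constraint; the work is in controlling \emph{maximality} and in keeping both $r$ even and the maximum degree parameter $\Delta$ aligned between the two instances.

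\textbf{The construction.} Given an ordinary graph $G = (V, E)$ of maximum degree $d$ on which we want to compute an MIS, I would build an $r$-uniform hypergraph $H$ (for a fixed even $r$) whose vertex set includes $V$ and whose hyperedges encode the edges of $G$, padded up to rank $r$. For $r = 2$ this is literally $H = G$. For general even $r \geq 4$, the simplest gadget is to replace each edge $\{u,v\} \in E$ with a rank-$r$ hyperedge on $u, v$ together with $r - 2$ fresh ``dummy'' vertices; one must choose the dummies so that each dummy participates in exactly one hyperedge (hence has degree $1$) and so that a $1$-weak MIS is forced to behave on the $V$-vertices exactly like an MIS of $G$. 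Since a $1$-weak IS in $H$ forbids two vertices of any hyperedge from both being selected, the restriction of any $1$-weak IS to $V$ is an independent set of $G$; conversely the maximality condition (each unselected vertex lies in a hyperedge already containing a selected vertex) must translate into maximality of the $V$-part. I would verify that the maximum degree of $H$ stays $\Theta(d)$ (the $V$-vertices keep degree $d$, the dummies have degree $1$), so $\Delta$ and $d$ are linearly related, and that the reduction is purely local: a single round of LOCAL communication in $H$ lets each original vertex read off its hyperedge incidences, and conversely each round of $H$ can be simulated by $O(1)$ rounds of $G$ since the gadget has constant radius.

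\textbf{Finishing the argument.} Assuming an algorithm $A$ that computes a $1$-weak MIS on $H$ in $T(\Delta, r, n)$ rounds, simulating it on the gadgeted instance yields an MIS of $G$ in $O(T)$ rounds. By the regular-graph MIS lower bound of~\cite{balliu2021lower}, any such MIS algorithm requires $\Omega(d + \log^* n)$ rounds, and because $\Delta = \Theta(d)$ and the vertex counts differ only by a constant factor (so $\log^* n$ is preserved), we conclude $T = \Omega(\Delta + \log^* n)$, as desired.

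\textbf{The main obstacle} is designing the padding gadget so that the maximality direction of the equivalence is faithful while simultaneously holding $r$ exactly even and not inflating $\Delta$. The danger is that a naive gadget lets a $1$-weak MIS select dummy vertices in a way that spuriously ``satisfies'' the maximality requirement for a $V$-vertex without any of its $G$-neighbors being selected, thereby breaking the correspondence to an MIS of $G$. I expect the fix to be a careful choice of dummy degrees and incidences (for instance, making each dummy vertex degree-$1$ and present only in its own hyperedge, so that selecting a dummy can never witness maximality for an original vertex), and I would spend the bulk of the proof checking exactly this equivalence in both directions.
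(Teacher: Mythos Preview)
Your proposed padding gadget has a genuine gap that your ``expected fix'' does not close. With degree-$1$ dummies (each fresh dummy appearing in exactly one hyperedge), the following is a perfectly valid $1$-weak MIS in $H$: pick exactly one dummy from every hyperedge and select no original vertices at all. Each hyperedge then contains exactly one selected vertex, so the set is $1$-weak; every unselected dummy sees its lone hyperedge already hit, so it is satisfied; and every original vertex $v$ sees each of its incident hyperedges hit by a dummy, so $v$ is satisfied too. The projection to $V$ is empty, which is not an MIS of $G$. The sentence ``selecting a dummy can never witness maximality for an original vertex'' is simply false: maximality at $v$ asks only that some hyperedge through $v$ contain a selected vertex, and a dummy in that hyperedge fulfils this no matter what its degree is. So for every even $r \geq 4$ your reduction breaks, and I do not see a local repair using per-edge padding.

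The paper sidesteps the issue by blowing up \emph{vertices} rather than padding edges: replace each $v \in V$ by $r/2$ clones $v_1,\ldots,v_{r/2}$, and replace each edge $\{u,v\}\in E$ by the rank-$r$ hyperedge $\{u_1,\ldots,u_{r/2},v_1,\ldots,v_{r/2}\}$. There are no dummies, every vertex of $H$ projects to a vertex of $G$, and the crucial point is that maximality at $v_i$ can only be witnessed by some $u_j$ with $\{u,v\}\in E$; hence putting $v$ into the $G$-independent set whenever some clone $v_i$ is selected yields a genuine MIS of $G$. The maximum degree of $H$ equals that of $G$, and $|V(H)| = (r/2)\,|V(G)|$, so both $\Delta$ and $\log^* n$ transfer. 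Your $r=2$ case is of course fine, but the general even-$r$ statement needs this vertex-cloning idea (or something equivalent), not edge padding.
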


\begin{proof}
    Consider a (non-hyper) graph $G = (V, E)$ of maximum degree $\Delta$. We construct a hypergraph $G' = (V', E')$ as follows. For each vertex $v \in V$, we create a set of $r / 2$ vertices $v_1, v_2, \dots, v_{r/2}$ in $V'$. For each edge $(v, u) \in E$, we add the edge $(v_1, v_2, \dots, v_{r / 2}, u_1, u_2, \dots$, $u_{r/2})$ to $E'$. Observe that a $1$-weak MIS in $G'$ can be converted into an MIS for $G$ as follows. Consider some $v_i \in V'$ such that $v_i$ is in the MIS, then, for any $u_j \in V'$ such that there exists some edge $e \in E'$ for which $v_i, u_j \in e$, $u_j$ is not in the MIS. Therefore, by adding to the MIS for $G$ the vertex $v$, for any $v$ such that some vertex $v_i$ is in the MIS of $G'$, we get a maximal independent set of $G$. As there is known to be a lower bound of $\Omega(\Delta + \log^* n)$ for finding such a lower bound for graphs, we get the stated lower bound.
\end{proof}

We can use the same technique to obtain a $\Omega(\Delta)$ lower bound for $k$-weak MIS for an odd value of $k$, and hypergraph of rank $(2k)$. The high-level idea is the same as above, with the difference that instead, we construct the hypergraph with $k$ copies of each vertex, and use the odd parity of $k$ to only add vertices in the original graph $G$ to the independent set of $G$ if at least $k/2$ of the corresponding vertices in the hypergraph belong to the corresponding $k$-weak independent set. This allows a more general bound, relating the rank and weakness of the set in a closer manner than before. We now formalize our construction.

\noindent
\textbf{Construction.} Given a graph $G = (V, E)$, and $k \in \mathbb{N}$ where $k$ is odd, we construct the hypergraph $H = (V', E')$ as follows. For each vertex $v_i \in V$, we add $k$ vertices into $V'$, labelled $v_{i, 1}, v_{i, 2}, \dots, v_{i, k}$. For each edge $(v_i, v_j) \in E$, we add the hyperedge $(v_{i, 1}, v_{i, 2}, \dots, v_{i, k}, v_{j, 1}, v_{j, 2}, \dots, v_{j, k})$ to $E'$.

\begin{theorem}
    \label{thm:delta-lb-k}
    Consider the family of $2k$-uniform hypergraphs with maximum degree $\Delta$ such that $k$ is odd. Then in this family, computing a $k$-weak MIS requires $\Omega(\Delta + \log^* n)$ rounds in the local model.
\end{theorem}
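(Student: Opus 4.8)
The plan is to mimic the reduction used in the proof of Theorem~\ref{thm:delta-lb}, now exploiting the odd parity of $k$. Starting from the \textbf{Construction} given above, I would show that any $T$-round algorithm computing a $k$-weak MIS on the $2k$-uniform hypergraph $H$ can be simulated to produce an ordinary MIS on the underlying graph $G$ in $O(T)$ rounds, so that the $\Omega(\Delta + \log^* n)$ lower bound of Balliu et al.~\cite{balliu2021lower} for MIS on regular graphs transfers. The conversion rule is a majority vote: place $v_i$ into the MIS of $G$ precisely when at least $(k+1)/2$ of its copies $v_{i,1}, \dots, v_{i,k}$ lie in the $k$-weak MIS $S$ of $H$. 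Since $k$ is odd, $(k+1)/2$ is an integer and this "more than half" threshold is unambiguous; the parity is precisely what makes the rule well behaved.

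For independence, I would argue as follows. Suppose $(v_i, v_j) \in E$ and both $v_i, v_j$ were placed in the MIS of $G$. Then at least $(k+1)/2$ copies of $v_i$ and at least $(k+1)/2$ copies of $v_j$ lie in $S$, so the hyperedge $e = (v_{i,1}, \dots, v_{i,k}, v_{j,1}, \dots, v_{j,k})$ would satisfy $\abs{e \cap S} \geq (k+1)/2 + (k+1)/2 = k+1 > k$, contradicting that $S$ is $k$-weak. Hence no edge of $G$ has both endpoints selected.

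The maximality direction is where the parity argument does its real work, and I expect this to be the main obstacle. Suppose $v_i$ is not placed in the MIS of $G$, so at most $(k-1)/2$ of its copies lie in $S$; in particular at least one copy $v_{i,\ell}$ is unselected. By maximality of the $k$-weak MIS, $v_{i,\ell}$ belongs to some hyperedge $e$ with $\abs{e \cap S} = k$. This $e$ corresponds to an edge $(v_i, v_j)$ of $G$ and contributes all $k$ copies of $v_i$ together with all $k$ copies of $v_j$. Since at most $(k-1)/2$ of the selected vertices in $e$ are copies of $v_i$, at least $k - (k-1)/2 = (k+1)/2$ must be copies of $v_j$; thus $v_j$ clears the majority threshold and is placed in the MIS of $G$. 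So every unselected vertex of $G$ has a selected neighbor, establishing maximality. The delicate point to get right is that the threshold $(k+1)/2$ must simultaneously be large enough to force independence (two selected endpoints overflow the hyperedge) and small enough that a saturated hyperedge always pushes the opposite endpoint over the threshold; odd $k$ is exactly what lets a single value serve both roles.

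It remains to verify the parameters and simulation overhead. Each vertex $v_i$ of $G$ simulates all $k$ of its copies, assigning them IDs derived from $\mathrm{id}(v_i)$ (e.g.\ pairs $(\mathrm{id}(v_i), \ell)$), which keeps the $H$-ID space polynomial in the number of $H$-vertices $\abs{V'} = k\abs{V}$. The underlying-graph neighbors of $v_{i,\ell}$ are exactly the other copies of $v_i$ (internal to $v_i$'s simulation) together with all copies of the $G$-neighbors of $v_i$, so a single round of communication in $G$ faithfully simulates one round in $H$, yielding an $O(T)$-round MIS algorithm for $G$. The hypergraph $H$ has rank exactly $2k$, and the degree of each copy $v_{i,\ell}$ equals $\deg_G(v_i)$, so $H$ has maximum degree $\Delta$; moreover $\log^*(k\abs{V}) = \Theta(\log^* \abs{V})$ for fixed $k$. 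Applying the $\Omega(\Delta + \log^* n)$ lower bound of~\cite{balliu2021lower} to the resulting $G$-algorithm then yields the claimed bound.
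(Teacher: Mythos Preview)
Your proposal is correct and follows essentially the same approach as the paper: the majority-vote rule ``at least $(k+1)/2$ copies in $S$'' is identical to the paper's ``more than $k/2$ copies in $S$'' when $k$ is odd, and your independence and maximality arguments mirror the paper's almost line for line. Your treatment is in fact slightly more careful---you explicitly justify why an unselected $v_i$ has an unselected copy before invoking maximality, and you spell out the simulation overhead (ID assignment, round-for-round simulation, degree and $\log^*$ preservation), all of which the paper leaves implicit.
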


\begin{proof}
    Let $H$ be constructed as above for the graph $G$, and let $S$ be a $k$-weak maximal independent set in $H$. Observe that each vertex $v_{i, \ell}$ must belong to at least one edge with $k$ vertices in the independent set. Further, as $\{ v_{i, \ell} \mid \ell \in [1, k]\} \subset e$ or $\{ v_{i, \ell} \mid \ell \in [1, k]\} \cap e = \emptyset$, for every $e \in E'$, we have that all vertices in $\{ v_{i, \ell} \mid \ell \in [1, k]\}$ share the same set of edges containing $k$ vertices in $S$.

    Now, let $e$ be some edge such that $\vert e \cap S \vert = k$, and let $e = (v_{i, 1}, v_{i, 2}, \dots, v_{i,k}, v_{j, 1}, v_{j, 2}$, $\dots, v_{j, k})$. Observe that, as $k$ is odd, either $\vert \{v_{i, 1}, v_{i, 2}, \dots, v_{i,k}\} \cap S \vert > k/2$ or $\vert \{v_{j, 1}, v_{j, 2}, \dots$, $v_{j,k}\} \cap S \vert > k/2$. % Let us assume that $\vert \{v_{i, 1}, v_{i, 2}, \dots, v_{i,k}\} \cap S \vert > k/2$.

    Consider the set $S' \subseteq V$ in the original graph $G$, where $v_i \in S$ if and only if $\vert \{v_{i, 1}, v_{i, 2}, \dots$, $v_{i,k}\} \cap S \vert > k/2$. Observe that, given any pair $v_i, v_j$ where $(v_i, v_j) \in E$, and $v_i \in S'$, we have that $\vert \{v_{j, 1}, v_{j, 2}, \dots, v_{j,k}\} \cap S \vert < k/2$, by definition of a $k$-weak independent set in $H$. Therefore, by construction, $v_j \notin S'$. In the other direction, if $v_j \notin S'$, then observe that there must be some edge $e \in E'$ where $\vert e \cap S \vert = k$ and $\{v_{j, 1}, v_{j, 2}, \dots, v_{j,k}\} \subset e$. Now, let $e = (v_{j, 1}, v_{j, 2}, \dots, v_{j,k}, v_{\ell, 1}, v_{\ell, 2}, \dots, v_{\ell,k})$ be such an edge, and observe that $\vert \{v_{\ell, 1}, v_{\ell, 2}, \dots, v_{\ell,k} \} \cap S \vert > k/2$. Therefore, $(v_j, v_{\ell}) \in E$ and $v_{\ell} \in S'$. Thus, for any $v_j$ not in $S'$, there exists at least one vertex adjacent to $v_j$ in $G$ that is in $S'$ and, conversely, for any $v_i \in S'$, no vertex adjacent to $v_i$ belongs to $S'$. Hence $S'$ is an independent set, and thus, by the previous bound of $\Omega(\Delta + \log^* n)$ rounds on finding an MIS in $G$ holds for finding a $k$-weak independent set on $H$.
\end{proof}

\section{LLL formulation}
\label{sec:lll}

With our lower bounds in mind, we now begin to provide positive results to the problem.
We first present a formulation of this problem using the Lovasz Local Lemma as a means to find an $(\alpha, \beta)$-independent set for some range of parameters.

Recall that the Lovasz Local Lemma (LLL)~\cite{erdos1975problems} gives sufficient conditions for a probability space to have an outcome that avoids a set of ``bad'' events. More formally, suppose $B_1, B_2, \ldots, B_m$ are events in a probability space $\Omega$ such that each $B_i$ is independent of all but $d$ other $B_j$ for $j \neq i$. The LLL asserts that if for all $\Pr(E_i) \leq p$ and $e p (d+1) \leq 1$, then there exists an outcome $\omega \in \Omega$ that is not contained in any of the $E_i$. While the original formulation is non-constructive, an efficient constructive, and distributed solution was found by Moser and Tard\"os in \cite{Moser2010-constructive}. In the distributed setting, each event $B_i$ is represented by a vertex in the network and $B_i$ and $B_j$ share an edge if and only if the events $B_i$ and $B_j$ are not independent. An \emph{LLL formulation} of a distributed problem $P$ is a reduction from $P$ to the LLL.\looseness=-1

The idea of our LLL formulation is that we associate with each vertex $v \in V$ with a $\set{0, 1}$-random variable with the interpretation that $v = 1$ if and only if $v$ is contained in the independent set. Thus, an outcome (i.e., an assignment of $\set{0,1}$-values to each $v$) corresponds to an $(\alpha, \beta)$-MIS if and only if (1)~for each edge $e \in E$, $\sum_{v \in e} v \leq \beta$, and (2)~for each vertex $v \in V$, either $v = 1$ or $v$ is contained in an edge $e$ with $\sum_{u \in e} u \geq \alpha$. In the LLL formulation, events $B$ correspond to violations of (1) and (2). We make the following observations assuming that $H = (V, E)$ has rank $r$ with maximum degree at most $\Delta$:
\begin{enumerate}
    \item Type~1 events $B_e$ and $B_f$ are independent unless their corresponding edges $e$ and $f$ intersect. Therefore $B_e$ is independent of all but $\Delta r$ type~1 events.
    \item Type~2 events $B_v$ and $B_u$ are independent unless their corresponding vertices $v$ and $u$ are neighbors. Thus, again, $B_v$ is independent of all but at most $\Delta r$ other such type~2 events.
    \item A type~1 event $B_e$ and type~2 event $B_v$ are independent unless $v \in e$. Thus, $B_e$ is independent of all but $r$ type~2 events, and $B_v$ is independent of all but $\Delta$ type~1 events.
    \item Every event is a function of the $1$-hop neighborhood of some vertex $v \in V$. Therefore, communication the LLL network can be simulated in the underlying graph with one additional communication round in the LOCAL model.
\end{enumerate}
Combining the observations 1--3 above, we obtain the following.

\begin{lemma}\label{lem:lll-degree}
    Consider the LLL formulation described above where $H$ is a hypergraph with rank $r$ and maximum degree $\Delta$. Then each event $B$ is independent of all but $d \leq \Delta r + \Delta + r \leq 2 \Delta r - 1$ other events.
\end{lemma}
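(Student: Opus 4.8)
The plan is to bound the dependency degree $d$ of an arbitrary event in the LLL formulation by separately counting, for each event type, how many other events can fail to be independent of it, and then take the worst case over the two event types. The key principle, which I would state explicitly up front, is that two events are dependent only if the sets of random variables they depend on overlap. Since each Type~1 event $B_e$ depends on the variables $\{v : v \in e\}$ and each Type~2 event $B_v$ depends on the variables in the $1$-hop neighborhood of $v$ (i.e.\ $v$ together with all vertices sharing an edge with $v$), dependency is entirely controlled by these vertex-set intersections.

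First I would handle a Type~1 event $B_e$. I would count its dependencies against each of the three categories recorded in observations 1--3. Against other Type~1 events: $B_e$ depends on the $r$ vertices of $e$, each of which lies in at most $\Delta$ edges, so at most $\Delta r$ edges $f$ share a vertex with $e$; subtracting $e$ itself gives at most $\Delta r - 1$ dependent Type~1 events. Against Type~2 events: $B_e$ is dependent on $B_v$ only when $v \in e$, contributing at most $r$ events. Summing, $B_e$ depends on at most $(\Delta r - 1) + r = \Delta r + r - 1$ other events. Symmetrically, I would bound a Type~2 event $B_v$: it depends on at most $\Delta r - 1$ other Type~2 events (its neighbors in the underlying graph, of which there are at most $\Delta r$, minus itself) and on at most $\Delta$ Type~1 events (the edges containing $v$), for a total of at most $\Delta r + \Delta - 1$.

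Taking the maximum over both cases, every event is independent of all but $d \le \Delta r + \max(r,\Delta) - 1 \le \Delta r + \Delta + r - 1 \le 2\Delta r - 1$ other events, where the final inequality uses $\Delta, r \ge 1$ so that $\Delta + r \le \Delta r + 1$. This establishes the claimed bound $d \le \Delta r + \Delta + r \le 2\Delta r - 1$ stated in Lemma~\ref{lem:lll-degree}.

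The argument is essentially a bookkeeping exercise, so I do not anticipate a genuine mathematical obstacle; the main point requiring care is the precise off-by-one accounting (whether to subtract the event itself) and making sure the dependency counts are not double-counted across the observation categories. The one substantive thing to get right is the simplification $\Delta + r \le \Delta r + 1$, which holds because $(\Delta - 1)(r - 1) \ge 0$ for $\Delta, r \ge 1$; this is what converts the clean combinatorial bound $\Delta r + \Delta + r$ into the round form $2\Delta r - 1$ used later.
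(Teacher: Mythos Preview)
Your approach matches the paper's exactly: the lemma is stated there simply as a consequence of ``combining observations 1--3,'' and you have correctly unpacked that bookkeeping with slightly more care about the off-by-one. One minor slip: the inequality $\Delta + r \le \Delta r + 1$ only yields $\Delta r + \Delta + r - 1 \le 2\Delta r$, not $2\Delta r - 1$; but this is moot since your sharper intermediate bound $d \le \Delta r + \max(r,\Delta) - 1$ already gives $d \le 2\Delta r - 1$ directly via $\max(r,\Delta) \le \Delta r$, so the detour through $\Delta r + \Delta + r - 1$ is unnecessary.
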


In order to complete the LLL formulation of finding an $(\alpha, \beta)$-MIS, we must define a probability measure on the outcomes such that probabilities $p$ of the events satisfy $e p (d+1) \leq 1$. To this end, we prove the following lemma.

\begin{lemma}\label{lem:lll-formulation}
    Suppose $H$ is a hypergraph with rank $r$ and maximum degree $\Delta$. Suppose $\alpha, \beta$ with $0 < \alpha \leq \beta \leq r - 1$ satisfy
    \begin{equation}\label{eqn:lll-bound}
        \frac{(\beta - \alpha)^2}{\beta + \alpha} \geq 6 \log(16 r \Delta).
    \end{equation}
    Then there is an LLL formulation of finding $(\alpha, \beta)$-independent sets in $H$.
\end{lemma}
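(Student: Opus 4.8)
The plan is to instantiate the LLL with the product distribution in which each vertex $v \in V$ is placed in the candidate set independently with probability $q = (\alpha + \beta)/(2r)$, and then to control each bad event by a Chernoff bound. By Lemma~\ref{lem:lll-degree} every event depends on at most $d \le 2\Delta r - 1$ others, so $d + 1 \le 2\Delta r$; hence it suffices to establish a uniform bound $\Pr(B) \le p$ with $p \le 1/(16 r \Delta)$, since then $e\,p\,(d+1) \le e \cdot \frac{1}{16 r \Delta} \cdot 2\Delta r = e/8 < 1$, which is exactly the LLL criterion (this is where the constant $16$ is engineered). The point of the choice of $q$ is that it centers the expected number of selected vertices of a full-rank edge at $qr = (\alpha + \beta)/2$, i.e.\ precisely at the midpoint of the admissible window $[\alpha, \beta]$, so both tails are symmetric around the target.

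For a type~1 event $B_e$, I would write $X_e = \sum_{v \in e} v$ and set $\mu := \E[X_e] = q\abs{e} \le (\alpha + \beta)/2$; since shrinking $\mu$ only decreases the upper tail, the worst case is $\abs{e} = r$. Taking $(1+\delta)\mu = \beta$ gives $\delta = (\beta - \alpha)/(\beta + \alpha) \in (0, 1]$ and $\delta^2 \mu = (\beta - \alpha)^2 / \paren{2(\beta + \alpha)}$, so the multiplicative bound $\Pr(X_e \ge (1+\delta)\mu) \le \exp(-\delta^2 \mu / 3)$ yields $\Pr(B_e) \le \exp\paren{-\tfrac{(\beta - \alpha)^2}{6(\beta + \alpha)}}$, which by hypothesis~\eqref{eqn:lll-bound} is at most $(16 r \Delta)^{-1}$. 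For a type~2 event $B_v$, I would fix a single full-rank edge $e \ni v$, observe that $B_v$ forces $X_e < \alpha$, and bound $\Pr(B_v) \le \Pr(X_e < \alpha)$; the lower bound $\Pr(X_e \le (1-\delta)\mu) \le \exp(-\delta^2 \mu / 2)$ with the same $\delta$ and $\mu = (\alpha + \beta)/2$ gives the even smaller value $\exp\paren{-\tfrac{(\beta - \alpha)^2}{4(\beta + \alpha)}} \le (16 r \Delta)^{-1}$. Combining the two cases gives the uniform bound $p \le (16 r \Delta)^{-1}$ and hence the LLL formulation.

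The step I expect to be the main obstacle is the type~2 bound "fix a full-rank edge $e \ni v$": if $v$ lies only in edges of small rank $s$ with $qs < \alpha$, then the mean of $X_e$ sits below the threshold $\alpha$ and the lower-tail estimate breaks down—indeed the instance can be infeasible for such a vertex, since an edge of rank below $\alpha$ can never contain $\alpha$ selected vertices. The clean resolution is to assume (consistent with the $r$-uniform families used for the lower bounds) that every vertex belongs to at least one edge of rank $r$, so that the incident edge used above has mean $(\alpha + \beta)/2 \ge \alpha$; more generally it is enough for each vertex to lie in an edge whose rank is large enough to keep the Chernoff exponent bounded below by $\log(16 r \Delta)$. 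A secondary, purely routine, check is that $\delta = (\beta - \alpha)/(\beta + \alpha) \le 1$ so that the quoted Chernoff bounds are valid, which holds because $\alpha > 0$ forces $\beta - \alpha \le \beta + \alpha$.
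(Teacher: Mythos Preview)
Your approach is essentially the paper's: the same product measure with $q=(\alpha+\beta)/(2r)$, the same Chernoff exponent $\delta=(\beta-\alpha)/(\beta+\alpha)$, and the same verification of $e p(d+1)\le 1$ via Lemma~\ref{lem:lll-degree}; the only cosmetic difference is that the paper bounds both event types simultaneously with the two-sided inequality $\Pr(|X_e-\mu|>\delta\mu)\le 2\exp(-\mu\delta^2/3)$ rather than splitting into upper and lower tails. Your caveat about needing each vertex to lie in a rank-$r$ edge is well taken---the paper writes $\E(X_e)=\mu$ without comment and thus tacitly assumes $r$-uniformity too.
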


\begin{proof}
    Let $\mu = \frac{\beta + \alpha}{2}$ denote the midpoint between $\alpha$ and $\beta$. Consider the probability measure determined by choosing each $v = 1$ with probability $q = \mu / r$ (and $v = 0$ with probability $1-q$) independently. For each edge $e$, let $X_e = \sum_{v \in e} v$, so that $\E(X_e) = \mu$. We make the following observations:
    \begin{itemize}
        \item if $X_e \leq \beta$, then the event $B_e$ does not occur
        \item if $X_e \geq \alpha$ the for every $v \in e$, the event $B_v$ does not occur
    \end{itemize}
    Observe that both conditions above are satisfied if $\abs{X_e - \mu} \leq \frac{\beta - \alpha}{2}$. To bound the probability that the latter condition does not hold, we apply the following Chernoff bound \cite{Mitzenmacher_Upfal_2005}: if $X$ is a sum of independent $0$-$1$ random variables with expected value $\mu$, then for $\delta$ satisfying $0 < \delta < 1$, we have
    \begin{equation}\label{eqn:chernoff}
        \Pr(\abs{X - \mu} > \delta \mu) \leq 2 \exp(-\mu \delta^2 / 3).
    \end{equation}
    Applying~(\ref{eqn:chernoff}) with $\delta = \frac{\beta - \alpha}{\beta + \alpha}$, we obtain
    \begin{equation}\label{eqn:bad-bound}
        \Pr\paren{\abs{X - \mu} > \frac{\beta - \alpha}{2}} \leq 2 \exp\paren{-\frac{(\beta - \alpha)^2}{6(\beta + \alpha)}}.
    \end{equation}
    This expression is an upper bound on the probability, $p$ of any bad event $B$. Therefore we find that
    \begin{align*}
        e p (d+1) &< 4 \cdot 2 \exp\paren{-\frac{(\beta - \alpha)^2}{6(\beta + \alpha)}} \cdot (2 \Delta r)
        \leq 16 \Delta r \exp(-\ln(16 \Delta r)) = 1,
    \end{align*}
    where we applied~(\ref{eqn:lll-bound}) in the second inequality.
\end{proof}

We may then apply a distributed formulation of Moser and Tard\"os's constructive LLL algorithm~\cite{Moser2010-constructive} to obtain the following:

\begin{corollary}\label{cor:lll}
    There exists a deterministic distributed algorithm that finds an $(\alpha, \beta)$-MIS in an hypergraph $H$ with rank $r$ and maximum degree $\Delta$ in $O(\log^2 n)$ rounds of the LOCAL model for any $\alpha, \beta$ satisfying $\frac{(\beta - \alpha)^2}{\beta + \alpha} \geq 6 \log(16 r \Delta)$. In particular, these conditions are satisfied for:
    \begin{itemize}
        \item $\alpha = 1$ and $\beta \geq 18 \log(16 r \Delta)$, and
        \item $\alpha = cr$ and $\beta \geq \alpha +  c' \sqrt{r \log(r \Delta)}$.
    \end{itemize}
\end{corollary}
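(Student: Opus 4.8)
The plan is to invoke Corollary~\ref{cor:lll}'s main claim—namely that the condition $\frac{(\beta - \alpha)^2}{\beta + \alpha} \geq 6 \log(16 r \Delta)$ suffices for an $O(\log^2 n)$-round algorithm, which itself follows from Lemma~\ref{lem:lll-formulation} combined with a distributed Moser--Tard\"os routine. Granting that, the only substantive work is to verify that the two special parameter choices listed in the corollary actually satisfy the LLL inequality~(\ref{eqn:lll-bound}). So the proof reduces to two elementary inequality checks, and I would organize it as such.

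For the first case, $\alpha = 1$ and $\beta \geq 18 \log(16 r \Delta)$, I would write $L := \log(16 r \Delta)$ and show $\frac{(\beta - 1)^2}{\beta + 1} \geq 6L$. First I would note that for $\beta \geq 2$ we have the crude bound $\frac{(\beta-1)^2}{\beta+1} \geq \frac{\beta}{3}$, since this is equivalent to $3(\beta-1)^2 \geq \beta(\beta+1)$, i.e. $2\beta^2 - 7\beta + 3 \geq 0$, which holds for all $\beta \geq 3$ (and one checks $\beta=2$ separately, or simply assumes $\beta$ is at least the stated $18L \geq 18\log 16 > 3$). Then $\frac{\beta}{3} \geq \frac{18L}{3} = 6L$, giving the desired inequality. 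The key point is that substituting $\beta \geq 18L$ leaves a comfortable constant-factor slack, so the routine algebra closes easily.

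For the second case, $\alpha = cr$ and $\beta = \alpha + c'\sqrt{r\log(r\Delta)}$, the numerator is $(\beta-\alpha)^2 = c'^2 r \log(r\Delta)$ and the denominator is $\beta + \alpha = 2\alpha + c'\sqrt{r\log(r\Delta)} = 2cr + c'\sqrt{r\log(r\Delta)}$. For the quotient to dominate $6\log(16r\Delta) = \Theta(\log(r\Delta))$, the idea is that the $r\log(r\Delta)$ in the numerator beats the $O(r)$ growth of the denominator: dividing gives roughly $\frac{c'^2 r \log(r\Delta)}{2cr + O(\sqrt{r \log(r\Delta)})} = \Theta(\log(r\Delta))$, so choosing the constant $c'$ large enough relative to $c$ makes the quotient exceed $6\log(16 r\Delta)$. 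I would make this precise by bounding the denominator above by, say, $3cr$ for $r$ large (absorbing the lower-order $\sqrt{r\log(r\Delta)}$ term), yielding the lower bound $\frac{c'^2 \log(r\Delta)}{3c}$, and then picking $c'$ so that $\frac{c'^2}{3c} \geq 7$ suffices to beat $6\log(16 r \Delta) \leq 7 \log(r\Delta)$ for $r\Delta$ large.

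The main obstacle is entirely bookkeeping rather than conceptual: the corollary states the second case with unspecified constants ``$c$'' and ``$c'$'' and a hidden asymptotic, so the honest statement is that the inequality holds \emph{for $r$ sufficiently large and $c'$ chosen large enough depending on $c$}. The delicate part is therefore tracking the additive $16$ inside the logarithm and the lower-order $\sqrt{r\log(r\Delta)}$ denominator term to confirm the asymptotic regime where the claimed bounds genuinely imply~(\ref{eqn:lll-bound}); neither check is hard, but the second requires stating the quantifier on the constants carefully rather than leaving it implicit.
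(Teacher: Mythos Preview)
Your proposal is correct and matches the paper's approach exactly: the paper derives the corollary in one line by combining Lemma~\ref{lem:lll-formulation} with a distributed Moser--Tard\"os routine and simply asserts the two bullet points without verification. Your added inequality checks for the two special cases are more detailed than anything the paper supplies, and the algebra you sketch is sound (including your caveat that the second bullet requires $c'$ large enough relative to $c$ and $r$ sufficiently large).
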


\subsection{A Zero Round Protocol}

We note that for larger values of $\beta - \alpha$, the concentration inequality in~(\ref{eqn:lll-bound}) is strong enough to guarantee that the set $S$ formed by selecting each vertex independently to be in $S$ with probability $q = \mu/r$ is an $(\alpha, \beta)$-MIS with high probability. In particular, we get the following:

\begin{corollary}\label{cor:zero-round}
    Suppose $\alpha$ and $\beta$ satisfy 
    \begin{equation}\label{eqn:zero-round}
        \frac{(\beta - \alpha)^2}{\beta + \alpha} \geq 6 c \log n + 6.
    \end{equation}
    Then there exists a zero-round randomized distributed algorithm that finds an $(\alpha, \beta)$-MIS with probability at least $1 - \frac{1}{n^c}$. In particular, this condition is satisfied for:
    \begin{itemize}
        \item $\alpha = 1$ and $\beta \geq 18 c \log n + 18$
        \item $\alpha = c' r$ and $\beta \geq \alpha + c'' \sqrt{r \log n}$
    \end{itemize}
\end{corollary}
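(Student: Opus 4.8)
The plan is to reuse the random assignment from the proof of Lemma~\ref{lem:lll-formulation}, but to bound the failure probability by a direct global union bound rather than invoking the LLL. The zero-round algorithm is simply: every vertex $v$ independently sets $v = 1$ (i.e., joins $S$) with probability $q = \mu/r$, where $\mu = (\beta + \alpha)/2$, and $v = 0$ otherwise. Since no vertex consults its neighborhood, this runs in zero rounds, and it remains only to show that the resulting $S$ is an $(\alpha,\beta)$-IS with probability at least $1 - 1/n^c$.

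First I would recall the structural observation from Lemma~\ref{lem:lll-formulation}: writing $X_e = \sum_{v \in e} v$, if every edge $e$ satisfies $\abs{X_e - \mu} \leq (\beta - \alpha)/2$---equivalently $\alpha \leq X_e \leq \beta$---then $S$ is a valid $(\alpha,\beta)$-IS. Indeed, $X_e \leq \beta$ for all $e$ is exactly the $\beta$-weak condition, while $X_e \geq \alpha$ for all $e$ guarantees that every (non-isolated) vertex lies in an edge containing at least $\alpha$ selected vertices, which is the lower-bound requirement of Definition~\ref{def:alpha_beta_sets}. Hence the algorithm fails only if some edge $e$ violates $\abs{X_e - \mu} \leq (\beta - \alpha)/2$.

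Next I would apply the Chernoff bound exactly as in~(\ref{eqn:bad-bound}): for each edge $e$,
\[
\Pr\paren{\abs{X_e - \mu} > \frac{\beta - \alpha}{2}} \leq 2 \exp\paren{-\frac{(\beta - \alpha)^2}{6(\beta + \alpha)}}.
\]
Substituting the hypothesis~(\ref{eqn:zero-round}), the right-hand side is at most $2 \exp(-(6c\log n + 6)/6) = (2/e)\, n^{-c} < n^{-c}$; here the $+6$ term in~(\ref{eqn:zero-round}) is precisely what supplies the factor $e^{-1}$ needed to absorb the constant $2$ from the two-sided bound. A symmetric check then confirms the two displayed parameter regimes: for $\alpha = 1$ one uses $(\beta-1)^2/(\beta+1) \geq \beta - 3$, and for $\alpha = c'r$ one uses $\beta + \alpha \leq 2r$ together with $\beta - \alpha \geq c''\sqrt{r \log n}$, so that each regime satisfies~(\ref{eqn:zero-round}) for a suitable choice of the leading constants.

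The step I expect to require the most care is the global union bound. Unlike the LLL argument, we must control \emph{all} edges simultaneously, so the failure probability is at most $m \cdot (2/e)\, n^{-c}$, where $m \leq n\Delta$ bounds the number of edges (each of the $n$ vertices lies in at most $\Delta$ edges). Since $m$ is polynomial in $n$, we have $\log m = O(\log n)$, and this polynomial factor is dominated by the $6c\log n$ term in the hypothesis; the precise accounting amounts to taking the constant in~(\ref{eqn:zero-round}) slightly larger (equivalently, reading the conclusion as holding with an exponent $c' = c - O(1)$), which does not change the asymptotic form of the condition. I would make this absorption explicit so that the stated bound $1 - 1/n^c$ follows cleanly.
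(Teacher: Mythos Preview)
Your proposal is correct and matches the paper's approach exactly: the paper gives no formal proof of this corollary, only the one-sentence remark that the concentration inequality~(\ref{eqn:bad-bound}) is already strong enough for the random assignment $v \mapsto 1$ with probability $\mu/r$ to succeed with high probability, which is precisely the union-bound argument you spell out. Your explicit accounting for the edge count in the union bound (and the honest observation that the stated exponent $c$ absorbs an additive $O(1)$ loss) is in fact more careful than what the paper provides.
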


\section{Finding Independent Sets in Hypergraphs of large rank}
\label{sec:high-rank}

We now present our result using a randomized algorithm to find weak independent sets of expected size at least $r/2k$ for $r \geq \sqrt{n}$ and $\Delta$ being $o(r)$. This result applies to both linear hypergraphs and to \emph{$\lambda$-intersecting} hypergraphs, hypergraphs where the size of the intersection between any pair of edges is no more than some constant value $\lambda$.

\begin{definition}[Linear hypergraphs]
    \label{def:linear_hypergraph}
    A hypergraph $G = (v, E)$ is \emph{linear} if, for every $e, f \in E$ where $e \neq f$, $\vert e \cap f \vert \leq 1$.
\end{definition}

\begin{definition}[$\lambda$-intersecting hypergraphs]
    A hypergraph $G = (V, E)$ is $\lambda$-intersecting if, $\forall e_1, e_2, \dots, e_{\lambda + 1} \in E$ where $e_i \neq e_j, \forall 1 \leq i < j \leq \lambda + 1$, it holds that $\vert \bigcap_{\ell \in [1, \lambda + 1]} e_{\ell} \vert \leq 1$.
\end{definition}

Note that a linear hypergraph is a $1$-intersecting hypergraph, and any $1$-intersecting hypergraph is a linear hypergraph.

\paragraph*{Algorithm outline.}
Let $G = (V, E)$ be a $\lambda$-intersecting hypergraph, and $S$ be a set of vertices, initially containing all vertices in $V$. Our algorithm operates by iteratively removing vertices from $S$ until $S$ forms an independent set.

% We start with a rank $r$ $\lambda$-linear hypergraph $G = (V, E)$ and set $S$. Initially, every vertex in $V$ belongs to the set $S$, with vertices removed from $S$ as the algorithm progresses until $S$ forms a weak independent set.

% We partition each edge $e$ into $r / k$ disjoint sets of size $k$, labeled $u_{e, 1}, u_{e, 2}, \dots, u_{e, r/k}$. Observe that every set constructed in this way satisfies $\vert \{ f \in E \mid u_{e, i} \subseteq \}$

We partition the edge $e$ into $r/k$ disjoint sets of size $k$, which we label $u_{e, 1}, u_{e, 2}, \dots, u_{e, r/k}$. Observe that, for any $k > 1$, there can be at most $\lambda$ edges containing $u_{e, i}$ as a strict subset.
% Observe that every set $u_{e, i}$ constructed in this way satisfies $\vert \{f \in E \mid u_{e, i} \subseteq f \} \vert \leq \lambda$, i.e. there exists at most $\lambda$ edges containing $u_{e, r/k}$ as a subset.

Now, each edge $e \in E$ selects randomly some vertex $v \in e$ to remove from $S$. 
If $\lambda < k$, for every $u_{e, i}$, $\vert u_{e, i} \cap S \vert \geq 1$, i.e. there is at least one vertex left in $u_{e, i}$ that is in the candidate independent set.

The vertices of each edge that are not removed stay in the independent set.

\paragraph*{Analysis}

% \begin{lemma}
% In a $\lambda$-linear hypergraph it holds that $\lambda \leq \Delta k$
    
% \end{lemma}

% \begin{proof}
% Each $v$ in $u_{e,i}$ is contained in at most $\Delta$ hyperedges , so no more that $\Delta k$ hyperedges can contain $u_{e,i}$ and this holds for every $i$ from $1$ to $r/k$ .
    
% \end{proof}

Consider some vertex $v_0$ in the set $u_{e,i}$.
$v_0$ escapes $e'$ where $e'$ is a neighbor of $e$ 
(meaning that $e'$ does not select $v_0$ to remove it) iff (a) $e'$ does not select any node in $u_{e,i}$ to remove or (b) given $e'$ selects some node in $u_{e,i}$ to remove but it does not select $v_0$.

The probability of (a) is $1-r/k$ and the probability of (b) is $k/r (1-1/k)$. Therefore the probability that $v_0$ escapes $e'$ is $1-k/r + k/r(1-1/k)$ which is exactly $1-1/r$. Therefore, the probability that $v_0$ escapes all $e'$ that contain it is $(1-1/r)^{\lambda}$. Noting that $\lambda \leq k \Delta$, when $\Delta$ is $o(r)$ we have that $(1 - 1/r)^{\lambda} \geq (1 - 1/r)^{k \Delta}$. By extension, the expected number of vertices in $e$ that escape is , by linearity of expectation , at least $r/k (1-(k \Delta)/r$ i.e. at least $r/(2k)$.

% So, the probability that $v_0$ escapes all $e'$ that contain it is $(1-1/r)^\lambda$ . This is at least $(1-1/r)^{\Delta k}$ . This is at least $1-(\Delta k)/r$ when $r \geq \sqrt{n}$ and $\Delta$ is $o(r)$ .

% Therefore the expected number of vertices in $e$ that escape is , by linearity of expectation , at least $r/k (1-(\Delta k)/r$ i.e. at least $r/(2k)$.

\begin{lemma}
\label{lem:high-rank}
The simple randomized algorithm presented above produces an $m$-weak independent set with expected size of $m$ at least $r/2k$.
\end{lemma}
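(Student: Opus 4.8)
The plan is to prove two things: that the surviving set $S$ is a bona fide weak independent set, and that every edge retains at least $r/2k$ of its vertices in expectation. The first is immediate from the algorithm, since each edge $e$ removes one of its own vertices, so $e$ is never entirely contained in $S$ and hence $S$ is a weak independent set. The substance of the lemma is the lower bound on the expected retained size, which I would obtain from the per-vertex survival probabilities together with linearity of expectation.

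First I would fix an arbitrary edge $e$ and write $\E[\abs{e \cap S}] = \sum_{v \in e} \Pr[v \text{ survives}]$. A vertex $v$ is removed exactly when one of the edges through it selects $v$; since every edge chooses its vertex uniformly at random and independently of the others, each edge $e' \ni v$ selects $v$ with probability $1/r$, so $v$ escapes $e'$ with probability $1 - 1/r$. (The block-based case split into (a) and (b) in the excerpt is just a verification of this per-edge probability.) Because distinct edges make independent choices, the escape events are independent across $e'$, giving $\Pr[v \text{ survives}] = (1-1/r)^{\deg(v)}$.

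Next I would bring in the $\lambda$-intersecting hypothesis through the block decomposition. Partitioning $e$ into its $r/k$ size-$k$ blocks $u_{e,i}$, the fact that each block has $k > 1$ vertices forces, via $\lambda$-intersection, at most $\lambda$ edges to contain it; when $\lambda < k$ this already guarantees that each block deterministically keeps at least one vertex, which supplies the factor $r/k$. Accounting for the edges that can delete vertices from a block (its $k$ members each lie in at most $\Delta$ edges, so at most $k\Delta$ interfering edges) and applying Bernoulli's inequality $(1-1/r)^{k\Delta} \ge 1 - k\Delta/r$, one arrives at $\E[\abs{e \cap S}] \ge \frac{r}{k}\paren{1 - \frac{k\Delta}{r}}$. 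Under the standing assumption $\Delta = o(r)$, the parenthesized factor exceeds $1/2$ once $r \ge 2k\Delta$, yielding the claimed bound $r/2k$.

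The step I expect to be the main obstacle is the combinatorial bookkeeping linking the block decomposition to the removal events: one must argue carefully that the $\lambda$-intersecting property genuinely caps the number of distinct edges that can delete vertices from a single block $u_{e,i}$, so that both the deterministic guarantee $\abs{u_{e,i} \cap S} \ge 1$ (requiring $\lambda < k$) and the interfering-edge count feeding the survival estimate come out with the correct constants. By contrast, the probabilistic ingredients --- independence of the edges' uniform choices, the $1 - 1/r$ per-edge escape probability, and Bernoulli's inequality --- are routine, and linearity of expectation does the rest.
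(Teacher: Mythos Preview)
Your proposal is correct and follows essentially the same route as the paper: compute the per-edge escape probability $1-1/r$ via the (a)/(b) case split, multiply over the edges containing a vertex using independence, apply Bernoulli's inequality $(1-1/r)^{k\Delta}\ge 1-k\Delta/r$, and conclude $\E[\abs{e\cap S}]\ge \frac{r}{k}(1-k\Delta/r)\ge r/2k$ by linearity of expectation under the assumption $\Delta=o(r)$. You are somewhat more explicit than the paper about the role of the block decomposition and the $\lambda$-intersecting hypothesis in the bookkeeping, but the argument structure and the numerical steps coincide.
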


%Since the events that a particular vertex does not escape are negatively dependent, the above size of the weak independent set holds also with high probability by concentration bounds around the expected value.

\section{Deterministic Algorithms}
\label{sec:deterministic-algorithms}

We now provide a set of deterministic algorithms to find a $(\alpha, \beta)$-IS, a $(2, k)$-Ruling set, or a $k$-MIS. All three approaches are based on the idea of finding \emph{defective colorings} of the graph. A $k$-defective coloring is a coloring such that no more than $k$ vertices in a given edge share the same color.

\subsection{Finding an $(\alpha, \beta)$-IS in $O(\Delta r / (\beta-\alpha))$ rounds}
\label{sec:alpha-beta-mis-alg}

% [{\color{red} TODO (Duncan): I think I got $\beta$ and $\alpha$ a bit muddled in the definition. I will come back and fix shortly}]
% [{\color{blue} (Will): It looks like this is exactly the same algorithm that I described in our meeting, so nothing substantive to change there. I think it is more general to define the running time as a function of $\delta = \beta - \alpha$. In this case the running time is $\Delta r / \delta$. I would also suggest framing the algorithm as follows: Define a $\delta$-edge defective coloring to be a vertex coloring such that the intersection of each color class with each edge has cardinality at most $\delta$. Given any such coloring with $k$ colors, we can compute an $(\alpha, \beta)$-MIS in $k$ rounds for any $\delta \leq \beta - \alpha$. Then observe that we can obtain a $\delta$-edge defective coloring in $O(r \Delta / \delta + \log^* n)$  rounds using the edge partitioning trick together with known coloring algorithms. I think breaking the process up into two steps makes it more clear that any improvement on the coloring front gives an immediate improvement for $(\alpha, \beta)$-MIS computation.
% }]

We next provide a deterministic algorithm for finding an $(\alpha, \beta)$-IS in $O(\Delta r / (\beta - \alpha + 1))$ rounds. For the remainder of this section we assume that we have a hypergraph $H = (V, E)$ in the LOCAL model, of rank $r$. Further, we assume that every vertex $v$ knows the values of $\beta$ and $\alpha$ used, and the edges incident to $v$. We define the parameter $\delta = \beta - \alpha + 1$.
% Throughout this section, we use the variable $\delta = \beta - \alpha$ for notational convenience. Therefore, our algorithm requires $O(\Delta r / \delta)$ rounds. Importantly, if $\beta = c r$ for some constant $c$, and thus $\delta \geq c r / 2$, then this requires only $O(\Delta)$-rounds.
% Note that, in general,  $O(\Delta (r / \alpha))$ is significantly more efficient than the trivial $O(\Delta r)$ round algorithm, particularly as the value of $\alpha$ increases.

\noindent
\textbf{Outline.} Our algorithm operates in a two-stage manner. First, we find a \emph{$\delta$-edge defective coloring}, a coloring where no edge contains more than $\delta$ colors of a given class. Then, we iterate through each class, activating any vertices of the given color, which can then choose to add themselves to the independent set or not.

To determine the $\delta$-defective edge coloring, we have each edge partition itself into $\delta$ sets of size $r / \delta$. We construct a new graph by replacing each hyperedge with the new sets, thus getting a hypergraph of rank $r/\delta$, and finding an $\psi = O(\Delta r / \delta)$ coloring on the underlying graph of this new hypergraph. Once this coloring is found, we iterate through each color class from $1$ to $\psi$, activating at round $i$ all vertices with the $i^{th}$ color. Any active vertex $v$ that belongs only to edges with fewer than $\alpha$ vertices in the independent set will add itself to the set. Otherwise, $v$ excludes itself.

The key observation is that after the edge coloring, for any edge containing the vertex $v$, there are at most $\delta$ vertices sharing the same color as $v$. 
Therefore, if every edge incident to $v$ contains less than $\alpha$ vertices in the independent set, $v$ can add itself to the set, without any risk of the number of any edge incident to $v$ containing more than $\beta$ vertices in the independent set. We note that in the case of $\alpha = \beta$ (hence $\delta = 1$), this algorithm is equivalent to the ``trivial'' algorithm for finding a $\beta$-weak MIS suggested in Lemma~\ref{lem:delta-r-algorithm}. 

We now formalize our approach.

\noindent
\textbf{Finding a $\alpha$-defective coloring.}
For each edge $e \in E$, we construct $\delta$ sets of size $r/\delta$, $e_1, e_2, \dots, e_{\delta}$. For simplicity, if $E = \{v_1, v_2, \dots, v_{r}\}$ where $ID(v_1) < ID(v_2) < \dots < ID(v_{\delta})$, we have $e_i = \{v_{(i - 1) r/\delta + 1}, v_{(i - 1) r/\delta + 2}, \dots, v_{i r/\delta}\}$. Let $H' = \{V, E'\}$ be the hypergraph formed by replacing each edge in $e \in E$ with the sets $e_1, e_2, \dots, e_{\delta}$, formally $E' = \bigcup_{e \in E} \{e_1, e_2, \dots, e_{\delta}\}$. Correspondingly, let $G'$ be the underlying graph of $H'$.
Now, let $\mathbf{\Psi} : V \mapsto 1, 2, \dots, \Delta r / \delta + 1$ be a coloring of the underlying graph $G'$, corresponding to a maximum degree + 1 coloring of $G'$. Note that as $G'$ is a regular graph, $\mathbf{\Psi}$ can be found in $O(\sqrt{(\Delta r / \delta) (\log \Delta r / \delta}) + \log^* n)$ rounds by the algorithm due to Maus and Tonoyan~\cite{maus2020local}. Now, observe that as each edge in $H'$ is properly colored, and the edge $e \in E$ in the original hypergraph contains $\alpha$ such sets, for any color $\chi \in [\Delta r / \delta + 1]$, $\vert e \cap \{ v \in V \mid \mathbf{\Psi}(v) \}\vert \leq \delta$. Thus $\mathbf{\Psi}$ is a $\delta$-defective coloring of $H$.

\begin{lemma}
    \label{lem:defective_coloring}
    Given a hypergraph $H$ of rank $r$, and value $\delta \leq r$, Algorithm \ref{alg:defective_coloring} finds a $\delta$-defective coloring of $H$ using $\Delta r/\delta + 1$ colors can be found in $O(\sqrt{(\Delta r / \delta) (\log \Delta r / \delta}) + \log^* n)$ rounds.
\end{lemma}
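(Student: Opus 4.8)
The plan is to verify the two assertions of the lemma separately: (i)~that the coloring $\mathbf{\Psi}$ returned is genuinely $\delta$-defective for $H$, and (ii)~that it uses at most $\Delta r/\delta + 1$ colors and is computed within the claimed round complexity. Assertion~(i) is a clean counting argument resting entirely on the structural relationship between $H$ and the auxiliary hypergraph $H'$, while~(ii) follows from bounding the maximum degree of $G'$ and invoking the coloring algorithm of Maus and Tonoyan~\cite{maus2020local}.

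For correctness, I would fix an arbitrary edge $e = \set{v_1, \dots, v_r} \in E$ and an arbitrary color $\chi$, and argue that at most $\delta$ vertices of $e$ receive color $\chi$. By construction $e$ is the disjoint union $e_1 \cup e_2 \cup \cdots \cup e_\delta$ of its $\delta$ blocks, each of which is an edge of $H'$ and therefore induces a clique in the underlying graph $G'$. Since $\mathbf{\Psi}$ is a proper coloring of $G'$, all vertices inside a single clique $e_i$ receive pairwise distinct colors; in particular each block $e_i$ contains at most one vertex of color $\chi$. Summing over the $\delta$ blocks gives $\abs{\set{v \in e : \mathbf{\Psi}(v) = \chi}} \leq \delta$, which is exactly the $\delta$-defective condition.

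For the color count and running time, the key quantity is the maximum degree of $G'$. Each vertex $v$ lies in at most $\Delta$ edges of $H$, hence in at most $\Delta$ blocks of $H'$ (one per incident edge), and each such block has size at most $r/\delta$, contributing at most $r/\delta - 1$ neighbors. Thus the degree of $v$ in $G'$ is at most $\Delta(r/\delta - 1) \leq \Delta r/\delta$, so a proper $(\text{maxdeg}+1)$-coloring uses at most $\Delta r/\delta + 1$ colors. Applying the Maus--Tonoyan algorithm to $G'$, whose maximum degree is $O(\Delta r/\delta)$, yields such a coloring in $O(\sqrt{(\Delta r/\delta)\log(\Delta r/\delta)} + \log^* n)$ rounds. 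It then remains to observe that running an algorithm on $G'$ incurs only $O(1)$ overhead in the LOCAL model on $H$: every edge of $G'$ lies inside some hyperedge of $H$, so adjacency in $G'$ implies adjacency in the underlying graph of $H$, and each vertex can compute its incident blocks---and hence its $G'$-neighborhood---locally from the IDs of the vertices in its incident hyperedges, which it already knows.

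The main obstacle I anticipate is not any single deep step but the careful bookkeeping around the reduction to $G'$: one must confirm both that the degree bound holds uniformly (so the budget of $\Delta r/\delta + 1$ colors is never exceeded) and that communication on $G'$ genuinely simulates within the round budget with no hidden overhead. A secondary technical point is divisibility: when $\delta$ does not divide $r$, or when an edge has rank strictly less than $r$, the blocks should be taken of size $\lceil r/\delta \rceil$ (or as nearly equal as possible). This only decreases block sizes and the number of blocks, so both the $\delta$-defective guarantee and the degree bound are preserved, and the asymptotic round complexity is unaffected.
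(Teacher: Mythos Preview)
Your proposal is correct and follows exactly the same approach as the paper: the paper's proof is simply ``Follows from above,'' referring to the preceding paragraph, which gives precisely your argument (split each hyperedge into $\delta$ blocks, observe that a proper coloring of the underlying graph $G'$ of the resulting hypergraph $H'$ is $\delta$-defective for $H$, bound the degree of $G'$ by $\Delta r/\delta$, and invoke Maus--Tonoyan). Your write-up is in fact more careful than the paper's, supplying the explicit degree bound, the simulation remark, and the divisibility caveat that the paper omits.
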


\begin{proof}
    Follows from above.
\end{proof}

\begin{algorithm}
    \caption{$\delta$-defective coloring algorithm on the graph $H = (V, E)$ of rank $r$ for vertex $v \in V$. We assume $E$ only contains the edges incident to $v$.}
    \label{alg:defective_coloring}
    \begin{algorithmic}[1]
        \Procedure{$\delta$-Defectivecolor}{}
            \State Split each edge $e$ into $c$ subsets $e_1, e_2, \dots, e_{\delta}$, with $v$ in $e_i$.
            \State $H' \gets (V, \bigcup_{e \in E} \{e_1, e_2, \dots, e_{\delta}\})$
            \State $G' \gets$ Underlying Graph of $H'$
            \State $\mathbf{\Psi} \gets (\Delta r / \delta)$-\textsc{color($G', v$)}
            \State \textbf{Return} $\mathbf{\Psi}$.
        \EndProcedure
    \end{algorithmic}
\end{algorithm}

\noindent
\textbf{Forming an independent set.}
Using the $\delta$-defective coloring $\mathbf{\Psi}$, we now find the independent set. We iterate through the set of colors, activating at each round the set of vertices in the given color class. When active, each vertex is given the chance to add itself to the independent set, or to mark itself as inactive. Formally, at round $i$, any vertex $v$ where $\mathbf{\Psi}(v) = i$ will add itself to the independent set $S$ if and only if $\forall e \in E$ either $v \notin e$ or $v \in e$ and $\vert e \cap S \vert < \alpha$. Therefore, after $v$ has been activated, either $v$ is in the independent set $S$, or $v$ is incident to at least one edge containing $\alpha$ vertices in the independent set, locally satisfying the properties of an $(\alpha, \beta)$-IS.

\begin{algorithm}
    \caption{$O(\Delta r / \delta)$ round coloring algorithm for some vertex $v$ in the hypergraph $H = (V, E)$.}
    \label{alg:EdgePartitionIS}
    \begin{algorithmic}[1]
        \Procedure{EdgePartitionIS}{Hypergraph $H$, $(\alpha, \beta) \in \mathbb{N}^2$}
            \State $\mathbf{\Psi} \gets \delta$-\textsc{Defectivecolor}$()$
            \State $\psi \gets \vert \mathbf{\Psi} \vert$
            \State $A \gets \emptyset$ \% \emph{Set of vertices adjacent to $v$ in the independent set.}
            \For{$i \in [\psi]$}
                \If{$\mathbf{\Psi}(v) = i$}
                    \If{$\forall e \in \{v \in e \mid e \in E\}$, $\vert e \cap A \vert \leq \alpha$}
                        \State Mark as in the independent set
                        \State Broadcast that $v$ is adding itself to the independent set.
                    \Else
                        \State Mark as not in the independent set
                        \State Broadcast that $v$ is not adding itself to the independent set.
                    \EndIf
                \Else
                    \State Add to $A$ any adjacent vertex that adds itself to the independent set.
                \EndIf
            \EndFor
        \EndProcedure
    \end{algorithmic}
\end{algorithm}

\begin{theorem}
    \label{thm:edgepartitionIS}
    Algorithm \ref{alg:EdgePartitionIS} constructs an $(\alpha, \beta)$-IS in $O(\Delta r / \delta + \log^*n)$ rounds in any hypergraph $H = (V, E)$ of rank $r$.
\end{theorem}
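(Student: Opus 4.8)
The plan is to verify that Algorithm~\ref{alg:EdgePartitionIS} produces a valid $(\alpha, \beta)$-IS and then account for the running time. I would decompose the argument into three parts: the $\beta$-weak upper bound property (no edge ever accumulates more than $\beta$ selected vertices), the $\alpha$-lower bound property (every vertex ends up in the set or incident to an edge with at least $\alpha$ selected vertices), and the round complexity. The main structural tool is the $\delta$-defective coloring from Lemma~\ref{lem:defective_coloring}, whose defining guarantee is that for every edge $e$ and every color class $\chi$, at most $\delta = \beta - \alpha + 1$ vertices of $e$ receive color $\chi$. This is the property I would lean on throughout.

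First I would establish the $\beta$-weak property. A vertex $v$ with $\mathbf{\Psi}(v) = i$ only adds itself at round $i$, and it does so only when every incident edge $e$ currently satisfies $\abs{e \cap A} \le \alpha$ (writing the selected set for the incident edges as $A$, with $S$ the global set). Since all vertices added at round $i$ that lie in a common edge $e$ share the color $i$, and the defective coloring guarantees at most $\delta$ vertices of $e$ carry color $i$, the number of vertices of $e$ added at round $i$ is at most $\delta$. Combined with the activation test $\abs{e \cap S} \le \alpha$ at the start of round $i$, after round $i$ we have $\abs{e \cap S} \le \alpha + \delta = \alpha + (\beta - \alpha + 1)$; the subtle point is that the off-by-one in $\delta$ must be reconciled so the final count does not exceed $\beta$. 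I would argue that once an edge reaches $\alpha$ selected vertices it blocks all further additions in subsequent rounds, so the only regime in which vertices are still being added is when $\abs{e \cap S} < \alpha$, i.e. $\abs{e \cap S} \le \alpha - 1$ before round $i$; adding at most $\delta = \beta - \alpha + 1$ yields $\abs{e \cap S} \le \beta$, as required. This careful bookkeeping of the threshold versus the defect parameter is where I expect the main obstacle to lie, since the naive bound $\alpha + \delta = \beta + 1$ is one too large and the correct argument hinges on the strict inequality in the activation condition.

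Next, the $\alpha$-property is essentially immediate from the sequential activation: by the time round $\psi$ completes, every vertex $v$ has been activated exactly once (at round $\mathbf{\Psi}(v)$), and at that moment either it joined $S$, or it abstained precisely because some incident edge already had $\abs{e \cap S} \ge \alpha$. Either outcome locally certifies the $(\alpha, \beta)$-IS condition for $v$, and since selected vertices are never removed, an edge's count only grows, so this certificate persists to termination.

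Finally, for the running time I would invoke Lemma~\ref{lem:defective_coloring} to obtain the coloring in $O(\sqrt{(\Delta r / \delta)(\log(\Delta r / \delta))} + \log^* n)$ rounds, yielding $\psi = O(\Delta r / \delta)$ color classes. The activation loop runs for $\psi$ rounds, each requiring only $O(1)$ communication to broadcast membership decisions to neighbors. Summing the coloring cost and the loop, and absorbing the sublinear $\sqrt{(\Delta r/\delta)\log(\Delta r/\delta)}$ term into $O(\Delta r / \delta)$, gives the claimed $O(\Delta r / \delta + \log^* n) = O(\Delta r/(\beta - \alpha + 1) + \log^* n)$ bound.
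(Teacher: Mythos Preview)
Your proposal is correct and follows essentially the same inductive-over-color-classes argument as the paper, leveraging the $\delta$-defective coloring to bound the per-round additions within each edge and then iterating over the $O(\Delta r/\delta)$ classes. You are in fact more careful than the paper about the off-by-one in the activation threshold: the argument indeed requires the strict test $\abs{e \cap S} < \alpha$ (as the paper's prose ``fewer than $\alpha$'' indicates) rather than the $\leq \alpha$ appearing in the pseudocode, and your observation that $(\alpha - 1) + \delta = \beta$ is exactly the right bookkeeping.
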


\begin{proof}
    We prove this by an inductive argument. First, not that any vertex $v$ where $\mathbf{\Psi}(v) = 1$ will, by construction, add itself to the independent set $S$. As there can be at most $\alpha$ vertices in any edge colored $1$, no edge can, at the end of this step, violate the condition that no edge contains more than $\beta$ vertices in the independent set.

    Now consider some vertex $v$ where $\mathbf{\Psi}(v) = i$, and assume that after the first $i - 1$ color classes have been activated, no edge contains more than $\beta$ vertices in the independent set. If $v$ is incident to at least one edge with more than $\alpha$ vertices in the independent set, then $v$ may not add itself to the independent set. Note that $v$ is locally satisfied as it belongs to at least one edge containing at least $\alpha$-vertices in the independent set. Otherwise, if no such edge is incident to $v$, $v$ adds itself to the independent set. As no edge contains more than $\delta = \beta - \alpha + 1$ vertices in the color class $i$, and a vertex will only add itself to the independent set if there are fewer than $\alpha$ vertices in the independent set in every edge it is incident to, no edge can, after color class $i$ has been activated, contain more than $\beta$ vertices in the independent set. Thus after being activated, $v$ is either in the independent set, or incident to one edge with at least $\alpha$ vertices in the independent set. In either case, the locally checkable properties of the set are satisfied. By inductive argument, we have the claim of correctness.

    To determine the round complexity, observe that the $\alpha$-defective coloring can be found in $O(\sqrt{(\Delta r / \alpha) (\log \Delta r / \alpha}) + \log^* n)$ rounds. As this coloring uses at most $\Delta r / \alpha + 1$ colors, the second step of the algorithm requires $O(\Delta r / \alpha)$ rounds, giving the total complexity of $O(\Delta r / \alpha + \log^*n)$.
\end{proof}

\paragraph*{$(2, k)$-Ruling Sets.}
\label{sec:ruling-sets}

We note that our $(\alpha, \beta)$-IS algorithm may be adopted to find a $(2, k)$-Ruling sets in $O(\Delta (k + (2^{k - 1} / 3^{k - 1})r) + \log * n)$ rounds via a recursive approach for the hypergraph $H = (V, E)$. We begin with the special case where $k = O(\log r)$, requiring $O(\Delta \log r)$ rounds, noting that the arguments are similar in both cases.

We find a $(2, O(\log(r)))$-ruling set by first computing a $(2 r/3, r / 3)$-IS using Algorithm \ref{alg:EdgePartitionIS}, $S_1$. Using $S_1$, we take the graph $H_1 = (S_1, E_1)$, where $E_1 = \{e \cap S \mid e \in E \}$, of maximum rank $r_1 = 2 r / 3$, and find a $(4 r / 9, r / 9)$-IS (equivalently, a $(2 r_1 / 3, r_1 / 3)$-IS), $S_2$. We repeat this process $O(\log r)$ times, until the rank of the graph induced by the independent set is some constant $c$, at which point we find an $O(\Delta)$-coloring of the underlying graph in $O(\sqrt{\Delta \log \Delta} + \log^* n)$ rounds by the algorithm due to Maus and Tonoyan~\cite{maus2020local}, then iterate through each color class in order, with the vertices adding themselves to the set if no neighboring vertex belongs to it.

\begin{theorem}
    Algorithm \ref{alg:FindRS} finds a $(2, O(\log r))$-Ruling Set in a hypergraph $H = (V, E)$ of maximum degree $\Delta$ and rank $r$ in $O(\Delta \log r + \log * n)$ rounds in the LOCAL model of computation.
\end{theorem}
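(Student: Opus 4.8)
The plan is to analyze the recursion level by level. Write $r_0 = r$ and $S_0 = V$, and for $i \ge 1$ let $S_i$ be the $(\alpha_i, \beta_i)$-IS produced by Algorithm~\ref{alg:EdgePartitionIS} on the induced hypergraph $H_{i-1}$ on vertex set $S_{i-1}$, with $\beta_i = 2 r_{i-1}/3$ (the rank-shrinking bound) and $\alpha_i = r_{i-1}/3$ (the coverage bound). By the $\beta_i$-weak property, $H_i$ has rank $r_i \le 2 r_{i-1}/3$, so $r_i \le (2/3)^i r$ and the recursion reaches constant rank $c$ after $L = \log_{3/2}(r) + O(1) = O(\log r)$ levels, at which point the algorithm computes a maximal $1$-weak independent set $S^*$ of $H_L$ by greedy coloring. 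I would first establish that $S^*$ is an independent set of $H$ itself: if two distinct members shared an edge $e \in E$, they would both lie in $e \cap S_L$, which is an edge of $H_L$, contradicting that $S^*$ is $1$-weak in $H_L$. Hence every pair of members is at distance $\ge 2$, giving the first ruling-set condition.

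The heart of the argument is the covering bound, and the key ingredient is the $\alpha$-coverage guarantee. Since $S_i$ is an $(\alpha_i, \beta_i)$-IS of $H_{i-1}$ with $\alpha_i = r_{i-1}/3 \ge 1$ (valid while the rank exceeds the constant stopping threshold, so I would set $c \ge 3$), every vertex $v \in S_{i-1}$ lies in an edge $e$ of $H_{i-1}$ with $\vert e \cap S_i \vert \ge \alpha_i \ge 1$; choosing any $u \in e \cap S_i$ shows $\mathrm{dist}(v, S_i) \le 1$. Crucially, each edge of $H_{i-1}$ has the form $e' \cap S_{i-1}$ for some $e' \in E$, so adjacency in $H_{i-1}$ implies adjacency in $H$, and these distance-$1$ steps are genuine steps in $H$. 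I would then chain them: for an arbitrary $v \in V = S_0$, pick $v_1 \in S_1$ within distance $1$ of $v$, then $v_2 \in S_2$ within distance $1$ of $v_1$, and so on up to $v_L \in S_L$; finally use the maximality of $S^*$ in $H_L$ to obtain $v^* \in S^*$ with $\mathrm{dist}(v_L, v^*) \le 1$. By the triangle inequality in $H$, $\mathrm{dist}(v, v^*) \le L + 1 = O(\log r)$, establishing the second ruling-set condition.

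For the running time, Theorem~\ref{thm:edgepartitionIS} gives the cost of level $i$ as $O(\Delta r_{i-1}/\delta_i + \log^* n)$ with $\delta_i = \beta_i - \alpha_i + 1 = r_{i-1}/3 + 1 = \Theta(r_{i-1})$; hence $\Delta r_{i-1}/\delta_i = \Theta(\Delta)$ and each level costs $O(\Delta + \log^* n)$. The base-case greedy coloring MIS on the constant-rank graph $H_L$ costs $O(\sqrt{\Delta \log \Delta} + \log^* n) = O(\Delta + \log^* n)$. Summing over the $L = O(\log r)$ levels naively yields $O(\Delta \log r + \log r \cdot \log^* n)$, so the remaining work is to collapse the symmetry-breaking cost to a single occurrence.

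To reach the claimed $O(\Delta \log r + \log^* n)$ I would argue that the $\log^* n$ factor is incurred only once. The Maus--Tonoyan coloring needs $\log^* n$ rounds solely to break symmetry from the range-$[n]$ identifiers, after which each vertex holds a color in a range of size $O(\Delta r)$ that is independent of $n$. Feeding these bounded-range colors as identifiers into the coloring subroutine at every subsequent level replaces $\log^* n$ by $O(\log^*(\Delta r))$, which is absorbed into the per-level $O(\Delta)$ cost, so the total becomes $O(\Delta \log r) + \log^* n$. I expect the main obstacle to be precisely this last point: cleanly justifying that one may recolor the induced subgraphs across all $O(\log r)$ recursive calls using previously computed colors as identifiers without re-paying $\log^* n$. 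The secondary subtlety is in the coverage argument, where one must keep every distance-$1$ step valid in $H$ rather than merely in the induced subgraph $H_{i-1}$ so that the triangle inequality may be applied in $H$.
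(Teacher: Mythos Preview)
Your proposal is correct and follows essentially the same recursive scheme as the paper: iterate \textsc{EdgePartitionIS} with $(\alpha,\beta)=(r'/3,\,2r'/3)$ so the rank shrinks geometrically, chain distance-$1$ steps through the nested sets $S_0\supseteq S_1\supseteq\cdots$ to get the $O(\log r)$ covering radius, and finish with a $1$-weak MIS on the constant-rank residual graph. If anything, you are more careful than the paper on two points it glosses over: you justify explicitly why the final set is independent in $H$ (not merely in $H_L$), and you address the need to amortize the $\log^*n$ symmetry-breaking cost across levels, whereas the paper's proof simply asserts each recursive call costs $O(\Delta)$ rounds.
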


\begin{proof}
    Let $H_1 = (V, E_1)$, and let $S_1$ be the $(2 r / 3, r / 3)$-independent set found by Algorithm \ref{alg:EdgePartitionIS} on $H_1$. Note that, by Theorem \ref{thm:edgepartitionIS}, this requires $O(\Delta)$ rounds. In general, let $H_i = (S_{i - 1}, E_{i})$, where $E_{i} = \{e \cap S_{i - 1} \mid e \in E\}$, and $S_{i - 1}$ is the independent set found by Algorithm \ref{alg:EdgePartitionIS} on $H_{i - 1}$.
    
    We claim that every vertex $v \in V$ is a distance of at most $i$ from at least one vertex $u \in S_i$. First, consider the $S_1$. Observe that, by Algorithm \ref{alg:EdgePartitionIS}, every vertex must either be in the independent set, or adjacent to at least $r / 3$ vertices in this set. Thus, the statement holds. Assume now that this holds for every $j \in [i - 1]$. Then, observe that for any vertex $v \in S_{i - 1} \setminus S_i$ must be adjacent to some vertex $v'$ in $S_i$. Hence, any vertex $u$ at a distance of $i - 1$ from $v$ will be at a distance of at most $i$ from $v'$, and hence the claim is satisfied.

    Now, observe that after $O(\log r)$ rounds, we will have an $(1, 1)$-IS $S$ in $H$. Therefore, as no pair of vertices $S$ can be adjacent, $S$ must correspond to a $(2, O(\log r))$-ruling set.
\end{proof}

\begin{corollary}
    Algorithm \ref{alg:FindRS} finds a $(2, k)$-Ruling Set in a hypergraph $H = (V, E)$ of maximum degree $\Delta$ and rank $r$ in $O(\Delta (k + (2^{k - 1} / 3^{k - 1})r) + \log * n)$ rounds in the LOCAL model of computation.
\end{corollary}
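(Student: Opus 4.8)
The plan is to reuse the recursion underlying the preceding theorem, but to terminate it after $k-1$ levels and then finish with a single direct independent-set computation on the residual hypergraph. Writing $H_1 = H$ and $H_i = (S_{i-1}, E_i)$ with $E_i = \{e \cap S_{i-1} \mid e \in E\}$, I would let $S_i$ be the $(\alpha_i, \beta_i)$-IS produced by Algorithm~\ref{alg:EdgePartitionIS} on $H_i$ with $\alpha_i = r_i/3$ and $\beta_i = 2r_i/3$, where $r_i = (2/3)^{i-1} r$ is the rank of $H_i$. For each of these $k-1$ steps the gap satisfies $\delta_i = \beta_i - \alpha_i + 1 = r_i/3 + 1$, so Theorem~\ref{thm:edgepartitionIS} gives a per-level cost of $O(\Delta r_i / \delta_i + \log^* n) = O(\Delta + \log^* n)$ rounds; carrying out all $k-1$ levels therefore costs $O(\Delta k + \log^* n)$ rounds, with the $\log^* n$ term incurred only once since each coloring subroutine can be bootstrapped from the initial one rather than recomputed from the IDs.

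After these $k-1$ levels the residual hypergraph $H_k$ has rank $r_k = (2/3)^{k-1} r = (2^{k-1}/3^{k-1}) r$. Here I would abandon the $(\alpha,\beta)$-IS machinery, which only ever yields a $\beta$-weak set, and instead compute a genuine (i.e. $1$-weak) maximal independent set $S_k$ on the underlying graph of $H_k$ using the trivial algorithm of Lemma~\ref{lem:delta-r-algorithm}. This costs $O(\Delta r_k + \log^* n) = O(\Delta (2^{k-1}/3^{k-1}) r + \log^* n)$ rounds. Summing the two phases yields the claimed bound $O(\Delta(k + (2^{k-1}/3^{k-1}) r) + \log^* n)$.

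For correctness I would reuse the distance argument of the preceding theorem verbatim for the first $k-1$ levels: by induction every vertex of $V$ lies within distance $i$ of $S_i$, since any $u \in S_{i-1} \setminus S_i$ is, by the defining property of the $(\alpha_i,\beta_i)$-IS, adjacent in $H_i$ (hence in $H$) to a vertex of $S_i$. The final maximal-independent-set step extends this by exactly one: by maximality every vertex of $S_{k-1}$ is either in $S_k$ or adjacent to it, so every vertex of $V$ lies within distance $k$ of $S_k$. Since $S_k$ is a genuine independent set, no two of its vertices share an edge, giving the required separation of distance at least $2$, so $S_k$ is a $(2,k)$-ruling set.

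The main thing to get right is the regime split. The expensive final term $(2^{k-1}/3^{k-1}) r$ is only meaningful while $r_k \geq 1$, i.e. while $k$ is at most roughly $\log_{3/2} r$; once $k$ exceeds this threshold the recursion of the preceding theorem already drives the rank to a constant within fewer than $k$ levels, so one simply runs it to completion, obtaining a $(2, O(\log r))$-ruling set which is \emph{a fortiori} a $(2,k)$-ruling set, at cost $O(\Delta \log r + \log^* n) = O(\Delta k + \log^* n)$. In both regimes the stated formula holds. I expect the only real subtlety to be confirming that the per-level cost genuinely collapses to $O(\Delta)$ — this hinges on $\beta_i - \alpha_i$ remaining a constant fraction of $r_i$ at every level, which holds by construction — together with accounting honestly for a single $\log^* n$ contribution across the repeated colorings.
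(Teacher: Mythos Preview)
Your proposal is correct and follows essentially the same approach as the paper: run $k-1$ levels of the $(\alpha_i,\beta_i)$-IS recursion to obtain $S_{k-1}$ with rank $(2/3)^{k-1}r$ and distance guarantee $k-1$, then compute a true MIS on the underlying graph of the residual hypergraph to gain one more unit of distance while enforcing the separation. Your write-up is in fact more careful than the paper's---you spell out the per-level cost collapse via $\delta_i = \Theta(r_i)$, the single $\log^* n$ accounting, and the regime split for large $k$---but the underlying argument is the same.
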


\begin{proof}
    Using the same approach as above, we do $k - 1$ rounds of the recursive process, ending with the $((2^k / 3^k)r, r / 3^k)$-IS $S$ such that every vertex $v \in V \setminus S$ is at a distance of at most $k - 1$ from at least one vertex in $S$. Now, we compute an MIS $I$ in the graph $G$ corresponding to the underlying graph of $(S, \{e \cap S \mid e \in E \})$, requiring $O(\Delta (2^{k - 1} / 3^{k - 1})r)$ time. Note that, by the same arguments as above, every vertex $v \in V \setminus I$ is a distance of at most $k$ from at least one vertex $u \in I$. Thus, $I$ forms a $(2, k)$-Ruling Set of $H$.
\end{proof}

\begin{algorithm}
    \caption{Algorithm for finding a $(2, k)$-Ruling Set in the hypergraph $H$.}
    \label{alg:FindRS}
    \begin{algorithmic}
        \Procedure{FindRS}{Hypergraph $H$, $k \in \mathbb{N}$}
            \State $InIS \gets True$ \% \emph{Initially, we say every vertex is in the independent set.}
            \State $I \gets V$ \% \emph{Add every vertex to the independent set.}
            \State $r' = r$ \% \emph{Current}
            \While{$InIS$ and $r' > 2$}
                \State $E' \gets \{e \cap I \mid e \in E \}$ \% \emph{Update the edges with the neighbors of $v$ still in the independent set.}
                \State \textsc{EdgePartitionIS}($(I, E')$, $(2 r' / 3, r' / 3)$)
                \State $r' \gets 2 r' / 3$
                \If{$v$ not in the independent set}
                    \State $InIS \gets False$
                \EndIf
                \State Update independent set $I$.
            \EndWhile
            \If{$InIS$}
                \State $E' \gets \{e \cap I \mid e \in E \}$ \% \emph{Update the edges with the neighbors of $v$ still in the independent set.}
                \State \textsc{FindIS}$($Underlying Graph $(I, E')$ $)$ \% \emph{Using known algorithms, this wil take $O(r' \Delta + \log^* n) = O(\Delta + \log^* n)$ rounds.}
            \EndIf
        \EndProcedure
    \end{algorithmic}
\end{algorithm}

\subsection{Finding $k$-MIS for large $k$}
\label{sec:k-mis}

Here, we describe an algorithm that finds a $k$-weak MIS in in $O(\Delta^2 (r - k) \log r + \Delta \log r \log^* r + \log^* n)$ rounds. Note that in the regime where $r$ is much larger than $\Delta$ and $k$ is close to $r$, this gives an asymptotic improvement over the ``trivial'' $O(r \Delta + \log^* n)$ algorithm described in Section~\ref{sec:trivial-algorithms}. This result generalizes of a result of Balliu et al.~\cite{Balliu2023-distributed}, which gives an $O(\Delta^2 \log r + \log r \log^*r + \log^* n)$ time algorithm for finding an $(r-1)$-weak MIS, which is comparable our result for $k = r - 1$. 

The idea of our algorithm is to iterate the procedure described of Section~\ref{sec:alpha-beta-mis-alg} to find an $(\alpha, \beta)$-MIS $\Delta \log r$ times with $\beta = k$. On the first iteration, we obtain at $(k/2, k)$-MIS. In subsequent iterations, we consider only ``active'' vertices that have not yet joined the IS. In each iteration of the procedure, each edge $e$ with $a_e$ active vertices (and $i_e = r - a_e$ elements in the IS) partitions its active vertices, and the active vertices color themselves to give an edge-defective coloring as in Section~\ref{sec:alpha-beta-mis-alg}. This is done such that each color class contains at most $(k - i_e)/2$ colors, hence adding all vertices of any one color class will not violate the $k$-independence of the IS. Given this coloring, we then iterate over color classes, and each vertex adds itself to the IS greedily as in the previous section.

After each iteration of the procedure above, we observe that the following holds: for each vertex $v$, either $v$ is added to the IS, or $v$ is incident to some edge $e$ whose ``saturation'' (i.e., $\abs{e \cap S}$ increased from $k - \delta$ to $k - \delta/2$. Note that each edge can only increase its saturation in this way at most $\log r$ times before it contains $k$ elements in the IS. Thus, after at most $\Delta \log r$ phases, each vertex $v$ has either been added to the IS or is incident to a saturated edge. In what follows below, we describe a slight modification of this procedure that computes a $k$-MIS.

Before describing the procedure more formally, we introduce some notation. Suppose $S$ is an independent set in $H = (V, E)$ and $k$ is an integer with $1 \leq k \leq r - 1$.
\begin{itemize}
    \item The \emph{saturation} of an edge $e$ is $\sat(e) = \abs{e \cap S}$. We say $e$ is \emph{saturated} if $\sat(e) = k$. 
    \item We say that $v$ is \emph{active} if (1) $v \notin S$, and (2) $v$ is not contained in a saturated edge.
    \item We denote the set of active vertices in $e$ by $A_e$ and say that $e$ is \emph{inactive} if $A_e = \varnothing$.
    \item We say that $e$ is in \emph{phase} $\varphi$ if $(1 - 2^{-\varphi - 1})k < \sat(e) \leq (1 - 2^{-\varphi})k$. We denote the phase of $e$ by $\varphi(e)$.
\end{itemize}
We make the following observation about the phase of edges.

\begin{lemma}\label{lem:phase}
    If a edge $e$ is in phase $\varphi = \log r + 1$, then $e$ is saturated. In particular, each edge's phase can increase at most $\log r$ times before the edge is either saturated or inactive.
\end{lemma}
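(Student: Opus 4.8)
The plan is to prove the two assertions separately, with the first (phase $\log r + 1$ forcing saturation) doing essentially all the work and the second (``at most $\log r$ increases'') following immediately from monotonicity.

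For the first part, I would simply unwind the definition of phase. Orienting the phase intervals so that larger $\varphi$ corresponds to larger saturation, an edge $e$ in phase $\varphi$ satisfies $\sat(e) > (1 - 2^{-\varphi})k$. Setting $\varphi = \log r + 1$ and using $2^{-(\log r + 1)} = \tfrac{1}{2r}$ gives
\[
  \sat(e) > \paren{1 - \tfrac{1}{2r}} k = k - \tfrac{k}{2r},
\]
i.e. the deficit satisfies $k - \sat(e) < \tfrac{k}{2r}$. Now invoke the standing hypothesis $k \leq r - 1 < r$, so that $\tfrac{k}{2r} < \tfrac12 < 1$. Since $\sat(e) = \abs{e \cap S}$ is an integer and $\sat(e) \leq k$ holds throughout (as $S$ is maintained as a $k$-weak independent set), the bound $k - \sat(e) < 1$ forces $k - \sat(e) = 0$, hence $\sat(e) = k$ and $e$ is saturated. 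This integrality step---promoting a bound that is within $1$ of $k$ to exact equality, using only $k < r$---is the crux of the argument.

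For the ``in particular'' clause, I would argue by monotonicity. Over the course of the algorithm vertices are only ever added to $S$ and never removed, so $\sat(e)$ is non-decreasing across iterations, and therefore so is the phase $\varphi(e)$. By the first part, any non-saturated edge has phase at most $\log r$; since the phase is integer-valued and non-decreasing and starts at its minimum value, it can strictly increase at most $\log r$ times before it would reach phase $\log r + 1$, at which point the edge is saturated. (An edge may instead run out of active vertices and become inactive earlier, which only ends the process sooner.) This yields exactly the stated bound.

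The main issue is bookkeeping rather than any deep obstacle. As written, the interval $(1 - 2^{-\varphi - 1})k < \sat(e) \leq (1 - 2^{-\varphi})k$ has its endpoints in the wrong order (since $1 - 2^{-\varphi-1} > 1 - 2^{-\varphi}$), so I would first reorient them so that larger $\varphi$ means larger saturation and then recheck the boundary arithmetic at $\varphi = \log r + 1$. One should also confirm the logarithm is base $2$ (forced by the $2^{-\varphi}$ appearing in the definition), and verify that the strict versus non-strict inequalities at the interval endpoints are irrelevant to the integrality conclusion---which they are, since the slack $\tfrac{k}{2r}$ is bounded strictly below $1$.
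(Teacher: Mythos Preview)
The paper does not give a proof of this lemma; it is stated as an observation and then used in the proof of Theorem~\ref{thm:k-weak-mis}. Your argument is exactly the intended one: unwind the phase inequality at $\varphi = \log r + 1$, use $k \leq r - 1$ to bound the deficit $k - \sat(e)$ strictly below $1$, and invoke integrality of $\sat(e)$ together with the invariant $\sat(e) \leq k$ to conclude $\sat(e) = k$; the second clause then follows from monotonicity of $\sat(e)$ (hence of $\varphi(e)$) and the fact that $\varphi$ is integer-valued. You also correctly flag that the interval in the phase definition has its endpoints transposed as printed, and that the logarithm must be base~$2$; both are bookkeeping issues rather than obstacles, and your reorientation is the right fix.
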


The main procedure assumes that the vertices in the input are properly colored (in the underlying graph) with $O(\Delta^2 r^2)$ colors. Such a coloring can be found in $O(\log^* n)$ rounds using the algorithm of Linial~\cite{linial1992locality}.
\begin{algorithm}
\caption{$k$-weak MIS\label{alg:k-weak-mis}}
\begin{algorithmic}[1]
    \Procedure{kWeakMIS}{$H$}
    \State $S \gets \varnothing$\Comment{The IS}
    \State for each edge $e$, $A_e \gets e$\Comment{Active vertices in $e$}
    \For{each iteration $i = 1, 2, \ldots, 1 + \Delta \log r$}\label{ln:lp-start}
    \If{$\abs{A_e} \geq 4 (r - k)$}\label{ln:partition-start}
    \State partition $A_e$ into $\lfloor \abs{A_e}/4 \rfloor$ parts of size at most $5$
    \Else
    \State keep $A_e$ as a single part
    \EndIf\label{ln:partition-end}
    \State $H' \gets$ the hypergraph induced by this edge partition
    \State properly color $H'$ with $4 \Delta (r - k) + 1$ colors\label{ln:color}
    \For{each color class $c$}
    \State $V_c \gets$ vertices of color $c$
    \If{$v \in V_c$ and each $e \ni v$ satisfies $\abs{V_c \cap e} \leq k - \sat(e)$}
    \State $v$ adds itself to $S$ and removes itself from $A_e$\label{ln:add-self}
    \EndIf
    \EndFor
    \If{$v$ is is incident to a saturated edge}
    \State remove $v$ from all incident $A_e$
    \EndIf
    \EndFor\label{ln:lp-end}
    \EndProcedure
\end{algorithmic}
\end{algorithm}

We now state and prove our main result for this section.

\begin{theorem}\label{thm:k-weak-mis}
    Let $H = (V, E)$ be an $r$-uniform hypergraph with maximum degree at most $\Delta$. Then on input $H$, Algorithm~\ref{alg:k-weak-mis} produces a $k$-weak MIS in $O(\Delta^2 (r - k) \log r + \Delta \log r \log^* r + \log^* n)$ rounds in the LOCAL model.
\end{theorem}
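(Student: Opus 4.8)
The plan is to prove two things about Algorithm~\ref{alg:k-weak-mis}: that its output $S$ is always a $k$-weak MIS (correctness), and that it halts within the stated bound (complexity). I would organise the correctness argument around the saturation and phase bookkeeping introduced before Lemma~\ref{lem:phase}. The $k$-weakness half is the easy direction and I would dispatch it first as a loop invariant, namely that $\sat(e)\le k$ for every edge $e$ at all times. The only step adding vertices is line~\ref{ln:add-self}, where a vertex $v$ of color $c$ joins $S$ only if every incident edge $e$ satisfies $\abs{V_c\cap e}\le k-\sat(e)$. Because the coloring of $H'$ is proper, each part of $e$ is rainbow, so a whole color class may join $e$ simultaneously, but this raises $\sat(e)$ by at most $\abs{V_c\cap e}\le k-\sat(e)$; hence the invariant survives every color class and the final $S$ is $k$-weak.

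The heart of the proof is a per-iteration progress lemma: \emph{if $v$ is active at the start of an iteration and neither joins $S$ nor becomes incident to a saturated edge during it, then some edge incident to $v$ strictly advances its phase.} The key is the partition bound. Let $i_e=\sat(e)$ and $a_e=\abs{A_e}$ at the start of the iteration; since the $i_e$ vertices of $e\cap S$ and the $a_e$ active vertices of $e$ are disjoint, $i_e\le r-a_e$. When $a_e\ge 4(r-k)$, $A_e$ is split into $p_e=\lfloor a_e/4\rfloor$ parts, so the rainbow property gives $\abs{V_c\cap e}\le p_e\le a_e/4$, while $i_e\le r-a_e$ together with $r-k\le a_e/4$ yields $k-i_e\ge 3a_e/4$, hence $p_e<(k-i_e)/2$. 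Now if $v$ is blocked by such an edge $e$, the failed condition $\abs{V_c\cap e}>k-\sat(e)$ combined with $\abs{V_c\cap e}\le p_e<(k-i_e)/2$ forces the current saturation above $(k+i_e)/2$; that is, the gap to $k$ has at least halved relative to its start-of-iteration value, so by the definition of phase the phase of $e$ strictly increases. The complementary case is a single-part edge ($a_e<4(r-k)$): there the rainbow property gives $\abs{V_c\cap e}\le 1$, so blocking requires $\sat(e)=k$, i.e.\ $e$ is saturated and $v$ is removed in the cleanup step, making $v$ inactive. In all cases the lemma holds. I expect this lemma to be the main obstacle, as it is exactly where the constants $4(r-k)$ and part-size $5$ are needed and where the two cases must be separated cleanly.

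Maximality then follows by a counting argument. Each vertex $v$ lies in at most $\Delta$ edges, and by Lemma~\ref{lem:phase} each of them can advance its phase at most $\log r$ times before it is saturated. By the progress lemma, in every iteration an active $v$ either joins $S$, becomes incident to a saturated edge, or witnesses a phase advance of one of its $\le\Delta$ edges; since the total phase budget of those edges is at most $\Delta\log r$, $v$ can remain active through at most $\Delta\log r$ iterations, after which every incident edge is saturated and $v$ is no longer active. Thus after the $1+\Delta\log r$ iterations no active vertices remain, so every vertex is either in $S$ or incident to a saturated (i.e.\ $k$-element) edge, which is precisely the maximality condition for a $k$-weak MIS.

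For the round complexity I would bill the work as follows. A proper $O(\Delta^2 r^2)$-coloring of the underlying graph of $H$ is computed once by Linial's algorithm~\cite{linial1992locality} in $O(\log^* n)$ rounds; since the underlying graph of each $H'$ is a subgraph, this coloring restricts to a valid base coloring in every iteration. Each of the $1+\Delta\log r$ iterations then (i) partitions and builds $H'$ locally in $O(1)$ rounds, (ii) recolors the underlying graph of $H'$, whose maximum degree is at most $4\Delta(r-k)$, with $4\Delta(r-k)+1$ colors starting from the precomputed coloring via standard color reduction in $O(\Delta(r-k)+\log^* r)$ rounds (the $\log^*$ term now depends only on the base palette size $O(\Delta^2 r^2)$, not on $n$), and (iii) sweeps the $O(\Delta(r-k))$ color classes in $O(\Delta(r-k))$ rounds. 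Summing the per-iteration cost $O(\Delta(r-k)+\log^* r)$ over the $O(\Delta\log r)$ iterations and adding the one-time $O(\log^* n)$ gives the claimed bound $O(\Delta^2(r-k)\log r+\Delta\log r\log^* r+\log^* n)$.
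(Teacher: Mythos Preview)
Your proposal is correct and follows essentially the same route as the paper: both arguments hinge on the inequality $\abs{V_c\cap e}\le\frac{1}{2}(k-\sat(e))$ at the start of each iteration (your $p_e<(k-i_e)/2$ is exactly this), deduce that a blocked vertex witnesses a phase advance on some incident edge, and then invoke the $\Delta\log r$ phase budget together with the Linial precoloring plus per-iteration $O(\Delta(r-k)+\log^*r)$ recoloring. Your explicit loop invariant for $k$-weakness and your cleaner separation of the single-part case are minor presentational improvements, not substantive differences.
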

\begin{proof}
    We first establish the correctness of the procedure. To this end, let $v$ be an arbitrary vertex and $e_1, e_2, \ldots, e_d$ its incident edges. Suppose $v \notin S$ at the beginning of an iteration $i$ of the loop in lines~\ref{ln:lp-start}--\ref{ln:lp-end}. For $j = 1, 2, \ldots, d$, let $f_j \subseteq e_j$ denote the part of $e_j$ containing $v$ in the partition formed in Lines~\ref{ln:partition-start}--\ref{ln:partition-end}. By the choice of the size of the partitions, for each color $c$ in the coloring we have 
    \begin{equation}
        \label{eqn:class-size}
        \abs{V_c \cap e} \leq \frac{1}{2} (k - \sat(e))
    \end{equation}
    This expression is clearly true when $\abs{V_c} = 1$ (i.e., $A_e$ is maintained as a single part). On the other hand, if $\abs{A_e} \geq 4 (r - k)$ (hence $A_e$ is partitioned into multiple parts), then we have
    \begin{equation*}
        \abs{V_c \cap e} \leq \frac 1 4 \abs{A_e} < \frac 1 2 \cdot \frac 3 4 \abs{A_e} \leq \frac 1 2 \paren{\abs{A_e} - (r - k)} \leq \frac 1 2 (k - \sat(e)),
    \end{equation*}
    where the final inequality holds because $\abs{A_e} + \sat(e) \leq r$. 

    By~(\ref{eqn:class-size}), if $v$ does not add itself to $S$ in Line~\ref{ln:add-self}, it must be that some incident edge $e'$ increased its phase $\phi$ during this iteration of the the loop. To see this, suppose $v$ does not added to $S$ in iteration $i$ in which $v$ is colored $c$. Then $v$ has some incident edge $e$ such that $\abs{V_c \cap e} > k - \sat(e)$. However, at the beginning of iteration~(\ref{eqn:class-size}) was satisfied, hence $k - \sat(e)$ must have halved during iteration $i$, implying that its phase, $\phi(e)$ increased.
    
    Since $v$ is incident to at most $\Delta$ edges, and each incident edge can increase its phase at most $\log r$ times before becoming saturated or inactive (Lemma~\ref{lem:phase}), after $\Delta \log r$ iterations, either $v$ added itself to $S$, or it is incident to some saturated edge (hence inactive). Thus, $S$ is a $k$-weak MIS.

    For the running time analysis, we can initially color all of the vertices with $\Delta^2 r^2$ colors in $O(\log^* n)$ rounds using Linial's algorithm~\cite{linial1992locality}. To color the hypergraph $H'$ in Line~\ref{ln:color}, observe that each part of the partition has size at most $\max{5, 4(r - k)}$. Therefore, the underlying graph of $H'$ has maximum degree at most $O(\Delta(r - k)$. Given the initial coloring, the coloring algorithm of~\cite{maus2020local} has running time $O(\sqrt{\Delta (r - k)} + \log^* (r^2 \Delta^2)) = O(\Delta(r-k) + \log^*r)$. Iterating over the colors takes $O(\Delta(r - k))$ rounds, so the overall running time is $O(\Delta^2 (r - k) \log r +  \Delta \log r \log^* r + \log^* n$.
\end{proof}

\section{Conclusion}

In this paper, we have provided an overview of the $k$-MIS and $(\alpha, \beta)$-IS problems. Explicitly, we have shown lower bounds in terms of both $r$ and $\Delta$ on finding a $1$-MIS, and in terms of $\Delta$ on finding a $k$-MIS for certain values of $k$. In the other direction, we have provided a set of algorithmic results for finding both $k$-weak independent sets, and $(\alpha, \beta)$ independent sets. Our lower bound results suggest that finding $k$-weak MIS may be computationally costly, but the weaker $(\alpha, \beta)$-IS readily admits efficient algorithms for some range of parameters. Thus, in applications where some form of maximal independent set is desirable, the an $(\alpha, \beta)$-IS may offer sufficient ``maximality'' while allowing for significantly faster algorithms.

While our results provide some bounds on the LOCAL complexity of computing $k$-weak MIS and $(\alpha, \beta)$-IS, there are still significant gaps between the upper and lower bounds. Even for the case of $1$-weak MIS our lower bound $\Omega(\Delta + r + \log^* n)$ is quite far from the upper bound of $O(r \Delta + \log^* n)$. What is the LOCAL complexity of finding $1$-weak MIS?
Our algorithms show that for $k = r - o(r)$, the trivial upper bound of $r \Delta + \log^* n$ is not tight, at least for $\Delta$ much smaller than $r$. Can such improvements be found for $k = r - \Omega(r)$? 

More generally, what is the relationship between the complexities of $k$-weak MIS and $k'$-weak MIS? To this end, we observe that for $k < k'$, a $k$-weak MIS can always be extended to a $k'$-weak MIS. On the other hand, a $k'$-weak MIS need not contain any $k$-weak MIS. This seems to suggest that perhaps finding $k$-weak MIS becomes harder for smaller values of $k$. Indeed, we were only able to prove our $\Omega(\Delta)$ lower bound for $k = 1$.

In general, finding $(\alpha, \beta)$-IS seems much easier than finding $k$-weak MIS when $\beta - \alpha$ is reasonably large. Indeed, for $\alpha = 1$ and $\beta \sim c \log (r \Delta)$, $(\alpha, \beta)$-IS can be found without any communication. More generally, how does the complexity of finding an $(\alpha, \beta)$-IS vary with $\Delta$, $r$, and $\delta = \beta - \alpha + 1$?

%There are two clear directions for future research. First is investigating stronger deterministic and randomized algorithms for $k$-MIS in the general case. 
%It would seem likely that there exists some approach requiring fewer than $O(\Delta r)$ rounds in the general case. Secondly, the opposite direction of finding lower bounds remains open. This includes both finding a stronger bound for $k$-MIS than $\Omega(\Delta + r + \log^* n)$, and finding bounds for $(\alpha, \beta)$-IS when $\alpha \neq \beta$. In both cases, we conjecture that there exists some bound that is a function of both $\Delta$ and $r$, i.e. some bound of the form $O(f(\Delta) g(r))$. We further assume this bound is strictly smaller than $\Delta r$ for every value of $k$.

Our algorithms in Section~\ref{sec:deterministic-algorithms} relied upon finding $\delta$-edge defective colorings in hypergraphs---i.e., colorings in which each color class's intersection with each edge is at most $\delta$. Our approach employed known algorithms for (proper) graph coloring to find $\delta$-edge defective colorings. We think it would be interesting perform a more thorough investigation of distributed edge-defective colorings in hypergraphs. To this end, we believe the unified approach to distributed graph coloring described by Maus~\cite{maus2023-distributed} may be relevant.

\bibliography{bib.bib}

\end{document}